\begin{document}

\title{Bilateral Teleoperation of Multiple Robots under Scheduling Communication}
%
%
%

\author{Yuling~Li,~\IEEEmembership{Member,~IEEE,}
        Kun~Liu,~\IEEEmembership{Member,~IEEE,}
        Wei~He,~\IEEEmembership{Senior Member,~IEEE,}
        Yixin~Yin,~\IEEEmembership{Member,~IEEE,}
        Rolf Johansson,~\IEEEmembership{Fellow,~IEEE,}
         Kai~Zhang

\thanks{Y. Li, Y. Yin, W. He and K. Zhang are with the School of Automation and Electrical Engineering,
University of Science and Technology Beijing, Beijing, 100083, P.~R.~China, and the Key Laboratory of Knowledge Automation for Industrial Processes, Ministry of Education, Beijing 100083, P.~R.~China .\protect\\ E-mails: lyl8ustb@gmail.com, hewei.ac@gmail.com, yyx@ies.ustb.edu.cn, ustb\_zk@126.com.}
\thanks{K. Liu is with the School of Automation, Beijing Institute of Technology, Beijing 100081, P.~R.~China.  \protect\\ E-mail: kunliubit@bit.edu.cn (Corresponding author).}
 \thanks{R. Johansson is with the department of automatic control, Lund University, P.O. Box 118, 22100 Lund, Sweden. \protect E-mail: Rolf.Johansson@control.lth.se}
}

%
%

\markboth{IEEE Transactions on Control Systems Technology}%
{Shell \MakeLowercase{\textit{et al.}}: Bare Demo of IEEEtran.cls for IEEE Journals}
%



\maketitle

\begin{abstract}
In this paper, bilateral teleoperation of multiple slaves coupled to a single master under scheduling communication is investigated. The sampled-data transmission between the master and the multiple slaves is fulfilled over a delayed communication network, and at each sampling instant, only one slave is allowed to transmit its current information to the master side according to some scheduling protocols. To achieve the master-slave synchronization, Round-Robin scheduling protocol and Try-Once-Discard scheduling protocol are employed, respectively. By designing a scheduling-communication-based controller, some sufficient stability criteria related to the controller gain matrices, sampling intervals, and communication delays are obtained for the closed-loop teleoperation system under Round-Robin and Try-Once-Discard scheduling protocols, respectively. Finally, simulation studies are given to validate the effectiveness of the proposed results.
\end{abstract}

\begin{IEEEkeywords}
time-delay systems, teleoperation, robots, scheduling communication\end{IEEEkeywords}

\newtheorem{assumption}{Assumption}
\newtheorem{definition}[assumption]{Definition}
\newtheorem{theorem}[assumption]{Theorem}
\newtheorem{remark}[assumption]{Remark}
\newtheorem{proposition}[assumption]{Proposition}
\newtheorem{lemma}[assumption]{Lemma}

%
\IEEEpeerreviewmaketitle

\section{Introduction}
%
%
%
%

\IEEEPARstart{B}{ilateral} teleoperation systems  which allow human operators to extend their intelligence and manipulation skills to remote environments are widely used in applications such as telesurgery, space exploration, nuclear operation,  underwater exploration \cite{Hokayem2006_survey}.  A typical bilateral teleoperation system with a configuration of single-master-single-slave (SMSS) involves two robots which exchange position, velocity and/or haptic information through a networked communication channel.
However, for some complex tasks, teleoperation of one slave robot may fail in completing such tasks where multiple manipulators in cooperation are required. Hence,  teleoperation of multiple slave robots has  emerged to cope with a new set of applications incompatible with SMSS configurations \cite{Ghorbanian2013_dualmaster, Li2010_multiple, Palafox2009_multiple}. Teleoperation of multiple slaves can complete multiple tasks in a shorter time, covering large-scale areas, and with the ability to adapt to single point failures more easily, and hence effectively encompass a broader range of surveillance tasks, military operations, and rescue missions, and so on \cite{rodriguez2010bilateral}.  Teleoperation of multiple slaves can be manipulated by one human operator through one master, or by multiple human operators through multiple masters. In this paper, we restrict our attention to the former one, that is, teleoperation systems with single-master-multiple-slaves (SMMS) configurations.

It should be noted that due to  the distance of teleoperation, communication time delays are inevitable.  The classical approaches to deal with delayed bilateral teleoperation systems are passivity-based approaches \cite{Nuno2011_tutorial}, which are mostly based on scattering theory \cite{Spong1989_classical} and wave variable formalism \cite{Niemeyer1991_classical}. Some other passivity-based controllers relying on damping injection \cite{Nuno2010_improvedsyn}, and adaptive control \cite{ chan2014application, Sarras2014_JFI} were also developed  recently.
 For the passivity-based control methods, the assumption that both of the human operator and the environment be passive was imposed and thus it is restrictive. To remove the passivity assumption about external forces, input-to-state stability/input-to-output stability (ISS/IOS) theory is introduced into the control design and stability analysis of teleoperation systems. By applying ISS/IOS theory, Polushin et. al. \cite{Polushin2003_Iss, polushin2013a} firstly designed PD-based controllers for teleoperation systems, and proved the stability of the closed-loop system with communication delays by constructing two input-to-state stable subsystems. However, in \cite{Polushin2003_Iss, polushin2013a}  the positions of the master and the slave will exactly converge to the origin, which, however, should not be expected in applications of teleoperation systems. In other words, position synchronization between the master and the slave robots are expected, which do not necessarily imply that the positions  should converge to the origin. To overcome this limitation, Zhai et. al. \cite{zhai2016SIIOS} investigated a new IOS framework based on state-independent IOS for nonlinear teleoperation systems with asymmetric time-varying delays, where a switched filter-based control method was developed.
Some other advanced control strategies like  predictive control \cite{Uddin2016survey}, optimal control \cite{li2015guaranteed, mohammadi2017robust}, intelligent control based on fuzzy logic \cite{yangX2014_fuzzy, yang2014fuzzy} or neural networks \cite{wang2017NN, yang2016NN,  he2016constraints}, prescribed-performance-based control \cite{yang2016_finitetime, zhai2017ppc}, etc., have also been developed to deal with other various aspects of teleoperation systems with delays, such as finite-time stability \cite {yang2014_fuzzy, yang2016_finitetime}, guaranteed synchronization performance \cite{li2015guaranteed, yang2016_finitetime, zhai2017ppc, yang2017bimanual}, input saturation \cite{hashemzadeh2013teleoperation, lee2014controller, zhai2016_inputsaturation}, model uncertainties \cite{kim2013a, zhai2015_adaptive}, brain-machine-interface-based teleoperation \cite{qiu2017brain}.
However, most of these results are for SMSS teleoperation systems. Some but very limited ones considered teleoperation systems with SMMS or multiple-masters-multiple-slaves  (MMMS) configurations. Sirouspour \cite{Sirouspour2005} firstly studied the problem of MMMS teleoperation, while communication delays were neglected. Based on two-subsystems-decomposition method, \cite{li2013_NNmultiple} investigated adaptive neural network control for SMMS teleoperation systems with time delays and input dead-zone uncertainties. \cite{zhai2016_adaptive} addressed fuzzy control of MMMS with asymmetric time-varying delays and model uncertainties.

Like many works on networked systems \cite{Siddique2016observer, wen2018scheduling, liu2015networked}, communication bandwidth limitation should not be neglected in system design and synthesis. In many SMMS teleoperation systems with spatially distributed slaves, the output of multiple slaves can not be transmitted to the master  simultaneously because of bandwidth limitation in the communication network. Thus, it is desirable to stipulate that there can be only a limited number of communication slaves to access the network at the same time. Actually, in many practical cases, the communication is orchestrated by a scheduling rule called a protocol, by which the network sources can be properly scheduled. Specifically, in some practical teleoperation systems, only one sensing node (master or slave) is allowed to transmit its data over the communication network at a time, even though there are lots of nodes in the considered systems. However, to the best knowledge of the authors, very few works consider this limitation for teleoperation systems.

Along with another line, one commonly used assumption in teleoperation design is that the data transmission between the master(s) and the slave(s) is continuous in time, which is very restrictive in real applications. The continuous-time information exchanges are quite energy-consuming since the communication channels are always occupied in  high frequencies \cite{Fridman2010Aut, liu2012wirtingers}, and thus, this will increase the design and implementation cost as well. In fact, communications are likely to occur over a digital network in practice, such that the information is exchanged at discrete time intervals. Thus it is desirable to provide new results for  teleoperation systems with discrete-time data transmission.

Motivated by the aforementioned observations, this paper aims to solve the synchronization problem for a class of SMMS teleoperation systems under scheduling communication with discrete-time information exchanges and time-varying communication delays. Thus, the existing continuous-time controller and stability criteria for teleoperation systems are inapplicable for solving the considered problem.  At each sampling instant, only one slave is allowed to transmit its current information to the master side over communication network. The data transmission through the communication channel is discrete, and thus the transmitted signals are kept constant during the sampling period. An explicit expression of the designed controller is given. Only the samples of the position variables of the remote manipulators at discrete time instants are needed, and thus the amount of  transmitted synchronization information greatly reduces and the efficiency of bandwidth usage increases. This makes the tracking of teleoperation systems more efficient and useful in real-life applications. Two kinds of scheduling protocols, i.e., Round-Robin (RR) scheduling protocol and Try-Once-Discard (TOD) scheduling protocol, are provided and employed. By constructing Lyapunov-Krasovskii functionals, some efficient stability criteria in terms of linear matrix inequalities (LMIs) are obtained for the closed-loop SMMS teleoperation systems under scheduling communication network.

The rest of this paper is organized as follows. The problem formulation and some preliminaries are given in Section \ref{sec:pro_formulation}. Controllers for SMMS teleoperation systems with scheduling network and the stability analysis for the closed-loop system under RR and TOD scheduling protocols, respectively, are provided in Section \ref{sec:controller}. Then, some simulation results are given in Section \ref{sec:simulation} for illustration.  Finally, Section \ref{sec:conclusion} concludes the paper.

%
%
%

\textbf{Notations}: Throughout this paper, the superscript $T$ stands for matrix transposition. $\mathbb{R}^n$ denotes the $n$-dimensional Euclidean space with vector norm $|\cdot|$, $\mathbb{R}^{n\times m}$ is the set of all $n\times m$ real matrices. $*$ represents a block matrix which is readily referred by symmetry. $\mathbb{N}$ represents the set of non-negative integers while $\mathbb{N}^{+}$ is the set of  positive integers. $\lambda_{\min}(M)$ and $ \lambda_{\max}(M)$ denote the maximum and the minimum eigenvalue of matrix $M=M^T\in\mathbb{R}^{n\times n}$, respectively. For any function $f:[0,\infty)\rightarrow \mathbb{R}^n$, the $\mathcal{L}_\infty$-norm is defined as $\|f\|_\infty:=\sup_{t\geq 0}|f(t)|$, and the square of the $\mathcal{L}_2$-norm as $\|f\|_2^2:=\int_{0}^\infty |f(t)|^2dt$. The $\mathcal{L}_\infty$ and $\mathcal{L}_2$ spaces are defined as the sets $\{f:[0,\infty)\rightarrow \mathbb{R}^n,\|f\|_\infty<\infty\}$ and $\{f:[0,\infty)\rightarrow \mathbb{R}^n,\|f\|_2<\infty\}$, respectively.


%

\section{Problem Formulation and Preliminaries}\label{sec:pro_formulation}

This paper is concerned with bilateral teleoperation control of multiple  manipulators by a master manipulator over scheduling communication as shown in Fig.~\ref{fig:sche_tele}.
The dynamics of the SMMS teleoperation system consisting of a single $n$-degree of freedom (DOF) master manipulator and $N$ ($N\geq2$) $n$-DOF coordinated slave manipulators  can be described as follows:
\begin{equation}
M_m(q_m)\ddot{q}_m\!+\!C_m(q_m,\dot{q}_m)\dot{q}_m\!+\!G_m(q_m)\!=\!f_m\!+\!\tau_m,\label{eq:master}
\end{equation}

\begin{align}
\left.\begin{aligned}
&M_{s1}\!(\!q_{s1}\!)\!\ddot{q}_{s1}+C_{s1}\!(\!q_{s1},\dot{q}_{s1}\!)\!\dot{q}_{s1}+G_{s1}\!(\!q_{s1}\!)\!=f_{s1}+\tau_{s1},\\
&\quad\quad\quad\quad\quad\quad\quad\quad\quad\quad\quad\quad\vdots\\
&M_{si}(q_{si})\ddot{q}_{si}\!+\!C_{si}(q_{si},\dot{q}_{si})\dot{q}_{si}\!+\!G_{si}(q_{si})\!=\!f_{si}\!+\!\tau_{si},\\
&\quad\quad\quad\quad\quad\quad\quad\quad\quad\quad\quad\quad\vdots\\
&M_{sN}\!(\!q_{sN}\!)\!\ddot{q}_{sN}\!\!+\!\!C_{sN}\!(\!q_{sN}\!,\!\dot{q}_{sN}\!)\!\dot{q}_{sN}\!\!+\!\!G_{sN}\!(\!q_{sN}\!)\!
\!\!=\!\!f_{sN}\!\!+\!\!\tau_{sN}
\end{aligned}\!\!\right\}\!\label{eq:slave}
\end{align}
where $q_m/q_{si},\dot{q}_{m}/\dot{q}_{si},\ddot{q}_m/\ddot{q}_{si}\in\mathbb{R}^n$ are the joint positions, velocities and acceleration measurements of the master/$i$-th slave devices with $i=1, ..., N$, respectively. $M_m, M_{si}$ represent mass matrices, $C_{m}(q_m, \dot{q}_m), C_{si}(q_i, \dot{q}_{si})$ embody Coriolis and centrifugal effects. $\tau_m, \tau_{si}$ are the control forces, and finally $f_m, f_{si}$ are external forces applied to the manipulators. Each robot in (\ref{eq:master}) and (\ref{eq:slave}) possess the structural property of robotic systems, i.e., the following properties \cite{Nuno2011_tutorial}, \cite{Spong2006_book} with $j=m, s1, s2,... sN$, respectively:
\begin{enumerate}
\item[P1.] The inertia matrix $M_j(q_j)$ is a symmetric positive-definite function and is lower and upper bounded. i.e., $0<\lambda_j^m I\leq M_j(q_j)\leq \lambda_j^M I<\infty$, where $\lambda_j^m,\lambda_j^M$ are positive scalars.
\item[P2.] The matrix $\dot{M}_j(q_j)-2C_j(q_j,\dot{q_j})$ is skew symmetric.

\item[P3.] For all $q_j, x,y\in \mathbb{R}^{n}$, there exists a positive scalar $c_i$ such that $|C_i(q_j,x)y|\leq c_j|x||y|.$
\item[P4.] If $\ddot{q}_j$ and $\dot{q}_j$ are bounded, the time derivative of $C_j(q_j,\dot{q}_j)$ is bounded.

\end{enumerate}
\begin{figure}
  \centering
  \includegraphics[width=0.9\linewidth]{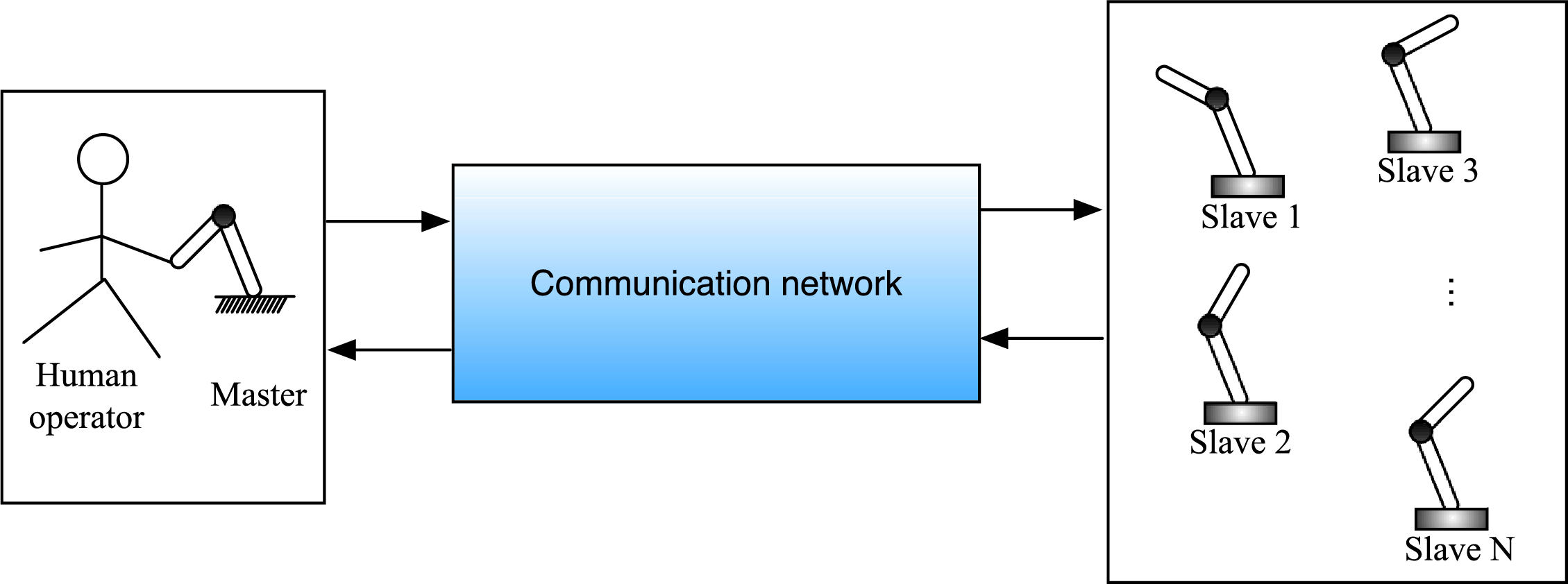}\\
  \caption{Diagram of single-master-multiple-slaves teleoperation system.}\label{fig:sche_tele}
\end{figure}

It is assumed that the data is transmitted from the master to the slaves and from the slaves to the master over delayed communication with  variable and symmetric delays, but only the data of one  manipulator can be transmitted from the local side to the remote side at one time due to the bandwidth limitation of the communication network. Thus the backward communication channel is orchestrated by a scheduling rule called
a protocol. This framework is described in Fig.~\ref{fig:sche_tele1}.

\begin{figure}[h!]
  \centering
  \includegraphics[width=\linewidth]{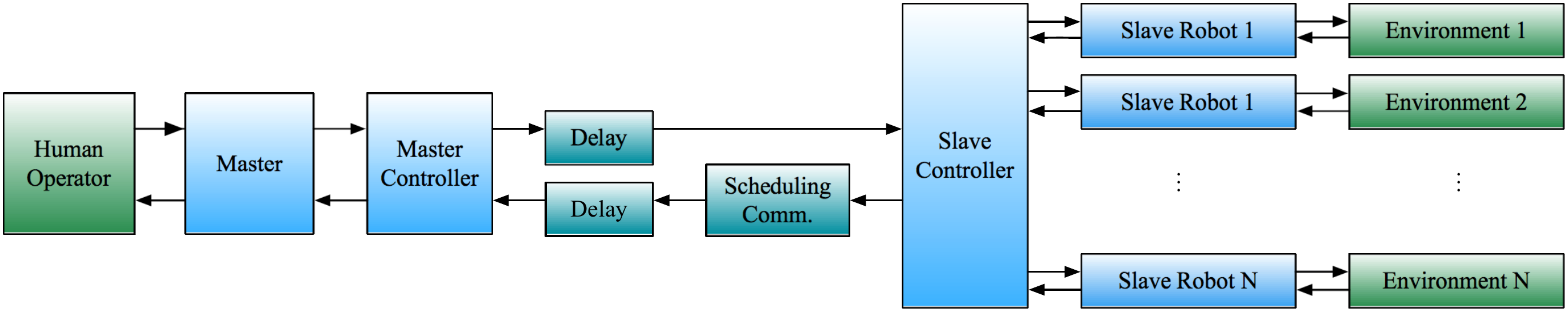}\\
  \caption{Framework of single-master-multiple-slaves teleoperation system with scheduling communication.}\label{fig:sche_tele1}
\end{figure}

 Suppose the sampling at the master and at the slaves are synchronous, and the sampling instants $s_k$, $k=0, 1,2,...$, are a sequence of monotonously increasing constants and satisfy $s_0=0$, $s_k<s_{k+1}$ for $k\in\mathbb{N}$ and $\lim_{k\rightarrow\infty}s_k=+\infty$. At each sampling instant $s_k$, only the sensors at one slave robot are allowed to transmit the sensed information over the communication network according to some scheduling protocols.  Denote by $T_k$ the time needed to transmit the sampling data at the instant $s_k$ to the remote side.  As depicted in Fig.~\ref{fig:sche_tele1}, the master's output is sampled at time instant $s_k$, this sampled information reaches the slave side and updates the Zero-Order Hold (ZOH) at time instant $t_k=s_k+T_k$. Similarly, if the output of one of the slaves is sampled at time instant $s_k$, this information updates the ZOH in the master side at time instant $t_k=s_k+T_k$. The sampling intervals and communication delays are assumed to have certain bounds which are precisely stated in Assumption~\ref{Amp:sampling}.
\begin{assumption}\label{Amp:sampling} There exist positive constants $MATI$ and $MAD$ such that the sampling intervals and communication delays $T_k$ hold for all $k\in\mathbb{N}$:
\begin{enumerate}
\item $s_{k+1}-s_k\leq MATI$,
\item $0\leq T_k\leq MAD$.
\end{enumerate}
\end{assumption}
Note that in Assumption \ref{Amp:sampling} the communication delays are not required to be small with $T_k<s_{k+1}-s_k$. As in \cite{liu2015networked} and \cite{freirich2016decentralized}, we allow  the communication delays to be non-small provided that the old sample cannot get to the remote side after the most recent one. The  time span between the instant $t_{k+1}$ and the current sampling instant $s_k$ is bounded with $t_{k+1}-t_k+T_k\leq MATI+MAD$.

\begin{remark}
Note that in this paper we assume that the signals are transmitted only at each sampling instants, thus the communication delays can be depicted as piecewise-constant functions. This assumption implies discrete-time information exchange between the master and the slaves, which is quite different from the previous works. Discrete-time information exchange would improve the communication efficiency and is energy-saving since the communication channel is not required to be occupied in high frequencies. Furthermore, discrete-time information exchange is more practical in real applications.
\end{remark}
\begin{remark}
Actually, the range of communication delays may vary in an interval with non-zero lower bound  in practice. In this paper, for simplification of analysis, we assume that the lower bound of the communication delays is zero. Our results can be easily extended to the case with non-zero lower bounded delays. Some studies for interval communication delays can be found in \cite{al2017improved, yangX2014_intervaldelay}.
\end{remark}

For the SMMS teleoperation system (\ref{eq:master}-\ref{eq:slave}), we assume that  the slave robots must maintain a distance and orientation from the formation's geometric center at all time. Denote by $\gamma_i$ for the $i$-th robot's distance from the formation's center $\bar{q}_s(t):=(1/N)\sum_{i=1}^Nq_{si}(t)$, we assume $\gamma_i(t)=\gamma_i$ and $\dot{\gamma}(t)=0$ for all $t\geq 0$. Furthermore, we assume that
\[\gamma_i(t)\neq\gamma_j(t),\forall i\neq j, \sum_{i=1}^N\gamma_i(t)=0.\]

This paper aims to offer a stable bilateral control framework that guarantees  master-slaves synchronization under scheduling communication. In summary, two control objectives are provided:
\begin{enumerate}
\item Position synchronization with coordinated motion between the master and the slaves should be achieved: the slave robots follow the master's command while maintaining  a relative distance $\gamma_i$ with respect to the formation's center at all the time, $\bar{q}_s\rightarrow q_m, q_{si}\rightarrow q_{m}+\gamma_i$. In other words, the position errors sastify $e_i:= q_m-(q_{si}-\gamma_i)\rightarrow 0$, $\bar{e}:= q_m-\bar{q}_s\rightarrow 0$.

\item Force tracking between the master and the slaves should be guaranteed, that is, the contribution of environmental forces should be reflected to the operator under steady-state conditions, i.e., $f_m=-\frac{1}{N}\sum_{i=1}^Nf_{si}:=-\bar{f}_s$.

\end{enumerate}

\section{Controller Design and Stability Analysis}\label{sec:controller}
Suppose that the positions  and velocities of the master and the slaves are available for measurement. In this paper, we allow only the positions of the local manipulators to be transmitted to the remote side.
 Since only one slave's position is scheduled to be active to update with the current information $q_{si}(s_k)$ at each sampling instant $s_k$, the updating law of the most recently received position information $\hat{q}_{si}(s_k)$ at the master side  is given as follows:
\begin{equation}\label{eq:update_qsi}
\hat{q}_{si}(s_k)=\left\{
\begin{aligned}
q_{si}(s_k), i=i_k^*\\
\hat{q}_{si}(s_{k-1}), i\neq i_k^*,
\end{aligned}\right.
\end{equation}
where  $i_k^*\in\{1,...,N\}$ is the active scheduled slave at the sampling instant $s_k$ and will be determined by some scheduling protocols provided later.

As described earlier in Section \ref{sec:pro_formulation}, we suppose that the control inputs from the remote side of the teleoperation system are generated by ZOH devices, and the controllers and the ZOH devices update their outputs as soon as they receive the new data, then for $t\in[t_k,t_{k+1})$, the  P+d-like controllers are proposed as follows:
\begin{align}
&\tau_m\!(t\!)\!=\!-\!\frac{1}{N}\![\sum_{i=1}^NK_i^p\!(q_m\!(t\!)\!-\!\hat{q}_{si}\!(s_{k}\!)\!)\!+\!K_i^d\dot{q}_{m}(t)\!]\!+\!G_m\!(q_m\!),\label{eq:taum}\\
&\tau_{si}\!(t\!)\!=\!-K_i^p(\check{q}_{si}(t)-\hat{q}_m(s_k))-K_i^d\dot{q}_{si}(t)+G_{si}(q_{si}),\label{eq:taus}
\end{align}
where $\hat{q}_m(s_k)=q_m(s_k)$, $\hat{q}_{si}$ is provided in (\ref{eq:update_qsi}), $\check{q}_{si}(t)=q_{si}(t)-\gamma_{i}(t)$, $0<K_i^p\in\mathbb{R}^{n\times n}, 0<K_i^d\in\mathbb{R}^{n\times n}$,
 $i\in\{1,..., N\}$. 

 \begin{remark}
 Note that in (\ref{eq:taum}-\ref{eq:taus}) only the position sampling signals need to be transmitted to the remote side through the communication channel. Thus the amount of transmitted synchronization information greatly reduces and the efficiency of bandwidth usage increases, which makes the teleoperation systems more efficient and useful in applications.
 \end{remark}

In the following, the stability analysis of the teleoperation system (\ref{eq:master}-\ref{eq:slave}) under the control (\ref{eq:taum}-\ref{eq:taus}) is provided. Specifically, the teleoperation system under RR and TOD protocols is studied, respectively.
\subsection{RR  protocol}

Under RR scheduling protocol, the measurements of the slaves are transmitted one after another periodically, that is, for each $k\in \mathbb{N}^{+}$, the active slave to access the communication network shall satisfy
\begin{equation}i_k^*=i_{k+N}^* \label{eq:ikstar_RR}\end{equation}
while repeating of the active node index within a circulation is prohibited, that is
\begin{equation}i_p^*\neq i_q^*, ~0< p<q\le N. \label{eq:ik_RR} \end{equation}
Thus, for $i\in\{1, 2, ..., N\}$, the master side $\hat{q}_{si}$  is represented as
\[\hat{q}_{si}(s_k)=q_{si}(s_{k-j}), j\in\{0,1,...,N-1\}.\]
According to the time-delay approach \cite{Fridman2010Aut}, denote $d_{si}(t)=t-s_{k-j}, t\in[t_k,t_{k+1})$ for an arbitrary given $k\in\mathbb{N}$, we have
\begin{align*}
0&\leq d_{si}(t)\leq t_{k+1}-s_{k-j}\\
&=s_{k+1}+T_{k+1}-s_{k-j}\\
&\leq N\cdot MATI+MAD\triangleq h_S.
\end{align*}
 Therefore, when all the measurements are transmitted at least once, i.e., for $t\geq t_{N-1}$, the master controller (\ref{eq:taum}) under RR protocol (\ref{eq:ikstar_RR}-\ref{eq:ik_RR}) can be represented as
 \begin{eqnarray}
\tau_m(t)&=&-\frac{1}{N}\sum_{i=1}^NK_{i}^p(q_m(t)-q_{si}(t-d_{si}(t)))\nonumber\\&&-\frac{1}{N}\sum_{i=1}^NK_i^d\dot{q}_m(t)+G_m(q_m(t)),t\geq t_{N-1}.\label{eq:taum_RR}
 \end{eqnarray}

Similarly, the slave controller (\ref{eq:taus}) under RR protocol (\ref{eq:ikstar_RR}-\ref{eq:ik_RR}) can be rewritten by denoting $d_m(t)=t-s_k$:
\begin{eqnarray}
\tau_{si}(t)&=&-K_i^p(\check{q}_{si}(t)-q_m(t-d_m(t)))-K_i^d\dot{q}_{si}(t)\nonumber\\&&+G_{si}(q_{si}(t)), t\geq t_{N-1}, i=1,..., N, \label{eq:tausi_RR}
\end{eqnarray}
where $0\leq d_m(t)\leq MATI+MAD\triangleq h_M<h_S.$

Substituting  (\ref{eq:taum_RR}-\ref{eq:tausi_RR}) into (\ref{eq:master}-\ref{eq:slave}), we obtain the closed-loop teleoperation system with the following dynamics for $t\geq t_{N-1}$:

\begin{equation}\left\{
\begin{aligned}
f_m(t)\!=\!&M_m(q_m(t))\ddot{q}_m(t)\!+\!C_m(q_m(t),\dot{q}_m(t))\dot{q}_m(t)\\
\!+\!&\frac{1}{N}\sum_{i\!=\!1}^N[K_i^p\!(q_m\!(t\!)\!-\!q_{si}\!(t\!-\!d_{si}\!(t\!)\!)\!)\!+\!K_i^d\dot{q}_m\!(t\!)],\\
f_{si}(t)\!=\!&M_{si}(q_{si}(t))\ddot{q}_{si}(t)\!+\!C_{si}(q_{si}(t),\dot{q}_{si}(t))\dot{q}_{si}(t)\\\!+\!&K_i^d\dot{q}_{si}(t)\!+\!K_i^p(\check{q}_{si}(t)\!-\!q_m(t\!-\!d_m(t))),
\end{aligned}
\right.\\\label{eq:clo_system_RR}\end{equation}
where $d_{si}\in[0,h_S]$, $d_m\in[0,h_M]$. The initial condition for ($\ref{eq:clo_system_RR}$) has the form of $x(t)=\text{col}\{q_m,\dot{q}_m,\check{q}_{s1},\dot{q}_{s1},...,\check{q}_{sN},\dot{q}_{sN}\}=\varphi(t)$, $t\in[-h_S,0]$, $\varphi(0)=x_0$.
The following theorem summaries the main results for the stability of the closed-loop system (\ref{eq:clo_system_RR}) under RR scheduling protocol (\ref{eq:ikstar_RR}-\ref{eq:ik_RR}).
 \begin{theorem}\label{Thm:RR}
 Consider the closed-loop teleoperation system (\ref{eq:clo_system_RR}) with the RR scheduling  protocol (\ref{eq:ikstar_RR}-\ref{eq:ik_RR}).
 If there exist $0<R_m\in\mathbb{R}^{n\times n}$, $0<R_{si}\in\mathbb{R}^{n\times n}$, $i=1,...,N$ such that the following LMIs
 \begin{align}
 \Pi_i\!=\!\begin{bmatrix}
 \Pi_{i}^{1}&0&0&-K_i^p\\
 * &\Pi_{i}^{2}&-K_i^p&0\\
 * & * & -h_M^{-1}R_m&0\\
 * & * & * & -h_S^{-1}R_{si}
 \end{bmatrix}
 \!<\!0\label{eq:LMI_RR}
 \end{align}
with $\Pi_{i}^{1}= -2K_i^d+h_MR_m, \Pi_{i}^{2}=-2K_i^d+h_SR_{si}$
 are satisfied,
 then the following claims hold:\begin{enumerate}
\item if the SMMS teleoperation system (\ref{eq:master}-\ref{eq:slave}) is in free motion, that is, $f_m(t)=f_{si}(t)\equiv 0, i=1,2,.., N$, then all the signals are bounded for $t\geq t_{N-1}$ and the position coordination errors, and velocities asymptotically converge to zero, that is, $\lim_{t\rightarrow \infty} q_m(t)-\check{q}_{si}(t)=\lim_{t\rightarrow \infty} \dot{q}_m(t)=\lim_{t\rightarrow \infty} \dot{q}_{si}(t)=0$, $i=1, 2,..., N$, which implies that $q_m\rightarrow \bar{q}_s$ and $q_{si}\rightarrow q_m+\gamma_i$ as $t\rightarrow \infty$. \label{it:pro_item1}
\item if $f_m\in \mathcal{L}_\infty,f_{si}\in\mathcal{L}_\infty$, then $\dot{q}_m(t), \dot{q}_{si}(t) \in\mathcal{L}_\infty$ for all $t\geq t_{N-1}$.
\item if $f_m\in \mathcal{L}_2,f_{si}\in\mathcal{L}_2$, then all the signals are bounded for all $t\geq t_{N-1}$, and $\lim_{t\rightarrow \infty} q_m(t)-\check{q}_{si}(t)=\lim_{t\rightarrow \infty} \dot{q}_m(t)=\lim_{t\rightarrow \infty} \dot{q}_{si}(t)=0$, which implies that $q_m\rightarrow \bar{q}_s$ and $q_{si}\rightarrow q_m+\gamma_i$ as $t\rightarrow \infty$.
\item the force tracking is guaranteed as the teleoperation system is in steady-state, i.e., $f_m=-\bar{f}_s$.
\end{enumerate}
 \end{theorem}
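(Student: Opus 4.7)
My plan is to construct a Lyapunov--Krasovskii functional $V=V_1+V_2+V_3$ tailored so that the matrix $\Pi_i$ of (\ref{eq:LMI_RR}) shows up explicitly in $\dot V$. I would take the doubled kinetic energy $V_1=\dot q_m^T M_m(q_m)\dot q_m+(1/N)\sum_i\dot q_{si}^T M_{si}(q_{si})\dot q_{si}$; the coordination-error coupling $V_2=(1/N)\sum_i(q_m-\check q_{si})^T K_i^p(q_m-\check q_{si})$; and the sampling/delay term
\[
V_3=\int_{t-h_M}^{t}(h_M-t+s)\dot q_m^T R_m\dot q_m\,ds+\frac{1}{N}\sum_{i=1}^{N}\int_{t-h_S}^{t}(h_S-t+s)\dot q_{si}^T R_{si}\dot q_{si}\,ds,
\]
designed so that $h_M$, $h_S$ and their reciprocals appear in the diagonal blocks of $\Pi_i$ after differentiation.

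Differentiating $V$ along (\ref{eq:clo_system_RR}), Property~P2 absorbs the $\dot q^T\dot M\dot q$ terms into $2\dot q^T C\dot q$, so the second derivatives cancel and $\dot V_1$ collapses to $2\dot q_m^T(f_m+\tau_m-G_m)+(2/N)\sum_i\dot q_{si}^T(f_{si}+\tau_{si}-G_{si})$. Substituting (\ref{eq:taum_RR})--(\ref{eq:tausi_RR}) and using the identities $q_m-q_{si}(t-d_{si})=e_i-\gamma_i+\delta_{si}$ and $\check q_{si}-q_m(t-d_m)=-e_i+\delta_m$, with $e_i=q_m-\check q_{si}$, $\delta_m=\int_{t-d_m}^t\dot q_m\,ds$ and $\delta_{si}=\int_{t-d_{si}}^t\dot q_{si}\,ds$, the $e_i$-cross-terms from $\dot V_1$ cancel exactly those produced by $\dot V_2=(2/N)\sum_i e_i^T K_i^p(\dot q_m-\dot q_{si})$. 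The remaining $\gamma_i$-residue $(2/N)\dot q_m^T\sum_i K_i^p\gamma_i$ vanishes under the structural condition $\sum_i K_i^p\gamma_i=0$, which is needed anyway for the force-tracking claim~4 and is automatic when the gains are common. Applying Jensen's inequality on $[t-d(t),t]$ together with $d(t)\le h$ converts the residual $\dot V_3$ integrals into $-h^{-1}\delta^T R\delta$, and collecting all surviving terms in free motion yields
\[
\dot V\le\frac{1}{N}\sum_{i=1}^{N}\xi_i^T\Pi_i\xi_i,\qquad\xi_i=\mathrm{col}\{\dot q_m,\dot q_{si},\delta_m,\delta_{si}\},
\]
with $\Pi_i$ as in (\ref{eq:LMI_RR}).

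Feasibility of the LMIs then gives $\dot V\le-\varepsilon(|\dot q_m|^2+\sum_i|\dot q_{si}|^2)$ for some $\varepsilon>0$. Boundedness of all signals follows because $V$ is bounded below by $V_1+V_2$; Properties~P1, P3 and P4 secure uniform continuity of the velocities via boundedness of $\ddot q$, so Barbalat's lemma drives the velocities to zero, and a LaSalle-style reading of the reduced closed-loop system then forces $e_i\to0$, settling claim~(\ref{it:pro_item1}). For claims~2 and~3 I would keep the same $V$ and dominate the forcing cross-terms $2\dot q_m^T f_m$ and $(2/N)\sum_i\dot q_{si}^T f_{si}$ by Young's inequality, obtaining $\mathcal{L}_\infty$ velocities when $f_m,f_{si}\in\mathcal{L}_\infty$ and, after integrating $\dot V$ on $[0,\infty)$, $\mathcal{L}_2$ velocities plus Barbalat convergence when $f_m,f_{si}\in\mathcal{L}_2$. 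Claim~4 is read off by setting every derivative to zero in (\ref{eq:clo_system_RR}) and eliminating $K_i^p(q_m-q_{si})$ between the master and slave steady-state equations.

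The main obstacle will be the bookkeeping: after substituting the controllers, the many cross-terms must be reorganized so that exactly the block form $\Pi_i$ of (\ref{eq:LMI_RR})---including the factor of $2$ on the damping entries $-2K_i^d$ and on the off-diagonal $-K_i^p$ positions---emerges, and the $\gamma_i$-residue must be shown to drop out structurally. This bookkeeping forces the choice of $V_1$ and $V_2$ without the conventional $1/2$ prefactor; the Jensen step for the delay integrals is then routine because $d_m\le h_M$ and $d_{si}\le h_S$.
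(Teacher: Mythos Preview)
Your approach is essentially the paper's: the same Lyapunov--Krasovskii functional (your $V$ is $1/N$ times theirs, and your single weighted integral in $V_3$ equals their double integral by Fubini), the same substitution/cancellation mechanism, and Jensen's inequality on $[t-d(t),t]$ to produce the block form $\Pi_i$. Your observation about the $\gamma_i$-residue $(2/N)\dot q_m^{T}\sum_i K_i^p\gamma_i$ is actually sharper than the paper, which silently writes $\check q_{si}$ in place of $q_{si}$ in its $\dot V_1$ and thereby implicitly assumes $\sum_i K_i^p\gamma_i=0$; as you note, this holds automatically for common gains and is also needed for claim~4.

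The one place where your plan is too loose is the phrase ``a LaSalle-style reading of the reduced closed-loop system then forces $e_i\to 0$.'' LaSalle's invariance principle does not apply directly here: the closed-loop system (\ref{eq:clo_system_RR}) is time-varying through the piecewise-continuous delays $d_m(t),d_{si}(t)$, and the state lives in an infinite-dimensional function space. The paper closes this step by applying Barbalat a second time: from $\dot q_m,\dot q_{si}\in\mathcal L_2\cap\mathcal L_\infty$ and $\ddot q_m,\ddot q_{si}\in\mathcal L_\infty$ one first gets $\dot q\to 0$; then one differentiates the closed-loop equations and uses P1--P4 together with the already-established bounds to show $\dddot q_m,\dddot q_{si}\in\mathcal L_\infty$, so Barbalat gives $\ddot q\to 0$; finally, reading (\ref{eq:clo_system_RR}) with $\dot q,\ddot q\to 0$ and $f\equiv 0$ forces $K_i^p(q_m-\check q_{si}(t-d_{si}))\to 0$, hence $e_i\to 0$ via (\ref{eq:positionerror_transform1}). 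You should replace the LaSalle remark with this second Barbalat step; the same refinement is what makes claim~3 go through after you obtain $\dot q\in\mathcal L_2$.
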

\begin{proof}
 To develop the stability condition for the closed-loop system (\ref{eq:clo_system_RR}) under RR protocol (\ref{eq:ikstar_RR}-\ref{eq:ik_RR}), we use the following Lyapunov-Krasovskii functional for $t\geq t_{N-1}$:
 \begin{equation}V(t)=V_1(t)+V_2(t)+V_3(t),\label{eq:V_RR}\end{equation}
 where
 \begin{align*}
 V_1(t)&=N\dot{q}_m^T(t)M_m(q_m(t))\dot{q}_m(t)\\
 &\quad+\sum_{i=1}^N\dot{q}_{si}^T(t)M_{si}(q_{si}(t))\dot{q}_{si}(t),\\
  V_2(t)&=\sum_{i=1}^N(q_m(t)-\check{q}_{si}(t))^TK_i^p(q_m(t)-\check{q}_{si}(t)),\\
 V_3(t)&=N\int_{-h_M}^{0}\int_{t+\theta}^t\dot{q}_{m}^T(\delta)R_m\dot{q}_{m}(\delta)d\delta d\theta
 \\&\quad+\sum_{i=1}^N\int_{-h_S}^{0}\int_{t+\theta}^t \dot{q}_{si}^T(\delta)R_{si}\dot{q}_{si}(\delta)d\delta d\theta.
 \end{align*}

The derivative of $V$ along with the trajectory of system (\ref{eq:clo_system_RR}) for $t\geq t_{N-1}$ is
$\dot{V}(t)=\sum_{i=1}^3\dot{V}_i(t)$ with
 \begin{align*}
 \dot{V}_1(t)&=2N\dot{q}_m^T(t)f_m(t)-2\sum_{i=1}^N\dot{q}_m^T(t)K_i^d\dot{q}_m(t)\\&\quad+\sum_{i=1}^N[2\dot{q}_{si}^T(t)f_{si}(t)-2\dot{q}_{si}^T(t)K_i^d\dot{q}_{si}(t)]
\\
 &\quad-2\sum_{i=1}^N\dot{q}_m^TK_i^p(q_m(t)-\check{q}_{si}(t-d_{si}(t)))
 \\&\quad-2\sum_{i=1}^N\dot{q}_{si}^TK_i^p(\check{q}_{si}(t)-q_m(t-d_m(t))),\\
\dot{V}_2(t)&=2\sum_{i=1}^N(q_m(t)-\check{q}_{si}(t))^TK_i^p(\dot{q}_m(t)-\dot{\check{q}}_{si}(t))\\
&=2\sum_{i=1}^N(q_m(t)-\check{q}_{si}(t))^TK_i^p(\dot{q}_m(t)-\dot{q}_{si}(t)),\\
  \dot{V}_3(t)
& \leq Nh_M\dot{q}_m^TR_m\dot{q}_m+\sum_{i=1}^Nh_S\dot{q}_{si}^TR_{si}\dot{q}_{si}
  \\&\quad-Nh_M^{-1}\int_{t-d_m(t)}^t\dot{q}_m^T(\delta)d\delta R_m\int_{t-d_m(t)}^t\dot{q}_m(\delta)d\delta
  \\&\quad-h_{S}^{-1}\sum_{i=1}^N\int_{t-d_{si}(t)}^t\dot{q}_{si}^T(\delta)d\delta R_{si}\int_{t-d_{si}\!(\!t\!)\!}^t\dot{q}_{si}(\delta)d\delta.
 \end{align*}
 Note that \begin{eqnarray}q_m\!(t\!)\!-\!\check{q}_{si}\!(\!t\!-\!d_{si}\!(t\!)\!)\!=\!q_m(t)\!-\!\check{q}_{si}(t)\!+\!\int_{t-d_{si}(t)}^t\dot{q}_{si}(\delta)d\delta\label{eq:positionerror_transform1}\end{eqnarray}
and
\begin{eqnarray}\check{q}_{si}\!(t\!)\!-\!q_m(t\!-\!d_m\!(t\!))=\check{q}_{si}(t)\!-\!q_m(t)\!+\!\int_{t\!-\!d_m(t)}^t\dot{q}_m(\delta)d\delta,\label{eq:positionerror_transform2}\end{eqnarray}
and
 hence one has
 \begin{equation}\label{eq:dotV_RR}
 \dot{V}\!(t)\!\leq\! 2N\dot{q}_m\!^T(t)f_m(t)\!+\!\sum_{i=1}^N\![2\dot{q}_{si}\!^T\!(\!t\!)f_{si}\!(t\!)\!+\!\chi_i\!(\!t\!)\!^T
\Pi_i\chi_i\!(t\!)],
 \end{equation}
 where $\Pi_i$ is given in (\ref{eq:LMI_RR})
 and
 \begin{equation}\chi_i\!=\!\text{col}\{\dot{q}_m,\dot{q}_{si},\int_{t-d_{m}(t)}^t\dot{q}_m(s)ds,\int_{t-d_{si}(t)}^t\dot{q}_{si}(s)ds\}.\label{eq: Xi}\end{equation}

 We first consider the case that $f_m(t)=f_{si}(t)\equiv 0$ for $i=1, ..., N$.
The LMIs (\ref{eq:LMI_RR}) yield
 $\dot{V}(t)\leq 0, t\geq t_{N-1}$, which implies that for $t\geq t_{N-1}$, $V(t)\leq V(t_{N-1})$.
 Furthermore, by (\ref{eq:dotV_RR}), we have
 \begin{equation}\label{eq:dV_simple_RR_noinput}
 \dot{V}(t)\!\leq\!\sum_{i=1}^N\chi_i(t)^T\Pi_i \chi_i(t)\leq \!-\!\sum_{i=1}^N\delta_i|\chi_i(t)|^2, t\geq t_{N-1}
 \end{equation}
 where $\delta_i=-\lambda_{\max}(\Pi_i)$, then integrating
 both sides of (\ref{eq:dV_simple_RR_noinput}) from $t_{N-1}$ to $t$, we have
\begin{eqnarray}\label{eq:V_ditui_RR}
V(t)\leq V(t_{N-1})-\sum_{i=1}^N\delta_i\int_{t_{N-1}}^t|\chi_i(s)|^2ds.
\end{eqnarray}
%
 Clearly, $\chi_i(t)\in\mathcal{L}_2$ and $V(t)\in\mathcal{L}_\infty$ for all $t\geq t_{N-1}$ and thus $\dot{q}_m(t),\dot{q}_{si}(t)\in\mathcal{L}_2$ for all $t\geq t_{N-1}$. The fact that  $V$ is radially unbounded with respect to $q_m-\check{q}_{si}, \dot{q}_m,\dot{q}_{si}$  shows that $q_m(t)-\check{q}_{si}, \dot{q}_m(t),\dot{q}_{si}(t)\in\mathcal{L}_\infty$ for $t\geq t_{N-1}$. Now by (\ref{eq:positionerror_transform1}) and (\ref{eq:positionerror_transform2}), the closed-loop dynamics (\ref{eq:clo_system_RR}) and Properties P1-P4, we know that $\ddot{q}_m(t),\ddot{q}_{si}(t)\in\mathcal{L}_\infty, t\geq t_{N-1}, i=1,...,N$. Thus by Barbalat's Lemma, we have $\lim_{t\rightarrow \infty}\dot{q}_m(t)=\dot{q}_{si}(t)=0$. Furthermore, by (\ref{eq:clo_system_RR}), one has that for any $t\geq t_{N-1}$,

 \begin{align*}
 \dddot{q}_m(t)=&-\frac{dM_m^{-1}(q_m(t))}{dt}[\frac{1}{N}\sum_{i=1}^NK_i^p(q_m(t)-\hat{q}_{si}(s_k)) \\&+\frac{1}{N}\sum_{i=1}^NK_i^d\dot{q}_m(t)
+C_m(q_m(t),\dot{q}_m(t))\dot{q}_m(t)]
\\&-M_m^{-1}(q_m(t))[\dot{C}_m(q_m(t),\dot{q}_m(t))\dot{q}_m(t)\\&+C_m(q_m(t),\dot{q}_m(t))\ddot{q}_m(t)+\frac{1}{N}\sum_{i=1}^NK_i^d\ddot{q}_m(t)
\\&+\frac{1}{N}\sum_{i=1}^NK_i^p\dot{q}_m(t)],
 \\ \dddot{q}_{si}(t)=&-\frac{dM_{si}^{-1}(q_{si}(t))}{dt}[(K_i^p(q_{si}(t)-q_{m}(t-d_m(t)))\\&+K_i^d\dot{q}_{si}(t))+C_{si}(q_{si}(t), \dot{q}_{si}(t))\dot{q}_{si}(t)]
\\&-M_{si}^{-1}(q_{si}(t))[\dot{C}_{si}(q_{si}(t), \dot{q}_{si}(t))\dot{q}_{si}(t)\\&+C_{si}(q_{si}(t),\dot{q}_{si}(t))\ddot{q}_{si}(t)+K_i^d\ddot{q}_{si}(t)+K_i^p\dot{q}_{si}(t)
 ],
 \\i=&1, ..., N,
 \end{align*}
thus $\dddot{q}_m(t),\dddot{q}_{si}(t)\in \mathcal{L}_\infty$ for $t\geq t_{N-1}$ since $\dot{q}_m,\ddot{q}_m,\dot{q}_{si},\ddot{q}_{si}\in\mathcal{L}_\infty$ for $t\geq t_{N-1}$. Thus $\lim_{t\rightarrow\infty}\ddot{q}_m(t)=\ddot{q}_{si}(t)=0$ by Barbalat's Lemma. Now by (\ref{eq:clo_system_RR}), it can be established that $|q_m(t)-\check{q}_{si}(t-d_{si}(t))|, |\check{q}_{si}(t)-q_{m}(t-d_m(t))|\rightarrow 0$ as $t\rightarrow 0$, and thus by (\ref{eq:positionerror_transform1}) and (\ref{eq:positionerror_transform2}), we have $|q_m(t)-\check{q}_{si}(t)|\rightarrow 0$ as $t\rightarrow \infty$.

 If $f_m, f_{si}\in \mathcal{L}_\infty$, that is, there exist positive constants $\Delta_m,\Delta_{si}$ such that $|f_m(t)|\leq \Delta_m, |f_{si}(t)|\leq \Delta_{si}$, then the derivative of the Lyapunov functional (\ref{eq:V_RR}) is given by
 \begin{equation}\dot{V}(t)\!\leq\!\sum_{i=1}^N\!-\!\frac{\delta_i}{2}|\chi_i(t)|^2\!+\!\frac{2}{\delta_i}(|f_m(t)|^2\!+\!|f_{si}(t)|^2), t\!\geq\! t_{N\!-\!1}.\label{eq:dV_simple_RR}\end{equation}

 Integrating
both sides of (\ref{eq:dV_simple_RR}) from $t_{N-1}$ to $t$, we have
\begin{align}
V(t)\leq&\!-\!\sum_{i=1}^N\int_{t_{N-1}}^t\!(\frac{\delta_i}{2} |\chi_i(s)|^2\!-\!\frac{2}{\delta_i}(|f_m(s)|^2\!+\!|f_{si}(s)|^2))ds,
\nonumber\\&+ V(t_{N-1}), t\geq t_{N-1}.\label{eq:dV_ditui_RR}
\end{align}

%
For all $|{\chi _i}(s)| > \frac{2}{{{\delta _i}}}\sqrt {(\Delta _m^2 + \Delta _{si}^2)} ,s \in [{t_{N - 1}},t)$, one has $\sigma_i (|\dot{q}_m(t)|^2+|\dot{q}_{si}(t)|^2+|q_m-\check{q}_{si}|^2)\leq V(t)\leq V(t_{N-1})$, which yields that $\dot{q}_m,\dot{q}_{si}\in\mathcal{L}_\infty, i=1,2,...,N$,
where $\sigma_i=\min\{\lambda_{\min}(NM_m(q_m)),\lambda_{\min}(NM_{si}(q_{si})), \lambda_{\min}(NK_i^p)\}$.  Thus, Claim 2 is established.

 If $f_m,f_{si}\in\mathcal{L}_2$, then (\ref{eq:dV_ditui_RR}) implies that
\begin{align*}&V(t)+\sum_{i=1}^N\int_{t_{N-1}}^t\frac{\delta_i}{2}|\chi_i(s)|^2ds\\\leq&\frac{2}{\delta_i}\sum_{i=1}^N(\|f_m\|_2^2+\|f_{si}\|_2^2)+V(t_{N-1}),t\geq t_{N-1},\end{align*}
 thus $V(t)\in \mathcal{L}_\infty$, $\int_{t_{N-1}}^t|\chi_i(s)|^2ds\in\mathcal{L}_\infty$ for all $t\geq t_{N-1}$, and then $\dot{q}_m,\dot{q}_{si},q_m-\check{q}_{si}\in\mathcal{L}_\infty$ for $t\geq t_{N-1}$. Further with $\int_{t_{N-1}}^t|\dot{q}_m(s)|^2ds\leq \int_{t_{N-1}}^t|\chi_i(s)|^2ds\in\mathcal{L}_\infty$ and $\int_{t_{N-1}}^t|\dot{q}_{si}(s)|^2ds\leq \int_{t_{N-1}}^t|\chi_i(s)|^2ds\in\mathcal{L}_\infty, i=1,...,N$, thus by
 Barbalat's Lemma, we have $\dot{q}_m(t)\rightarrow 0,\dot{q}_{si}(t)\rightarrow 0$ as $t\rightarrow \infty$.
 Furthermore, by
 (\ref{eq:positionerror_transform1}-\ref{eq:positionerror_transform2}), we know that $q_m(t)-q_{si}(t-d_{si}(t)),q_{si}(t)-q_m(t-d_m(t))\in\mathcal{L}_\infty$ for $t\geq t_{N-1}$.
Now following the same line of reasoning in the last part of the proof for Claim 1, the proof is completed using Barbalat's Lemma.

The proof of the last claim is obvious. Applying  $\dot{q}_m=0, \dot{q}_{si}=0$ and $\ddot{q}_m= 0, \ddot{q}_{si}= 0$, $i=1,..., N$ to the closed-loop system dynamics  (\ref{eq:clo_system_RR}),  we have that $f_m(t)=\frac{1}{N}\sum_{i=1}^NK_i^p(q_m(t)-q_{si}(t-d_{si}(t))=\frac{1}{N}\sum_{i=1}^N(-f_{si}(t))=-\bar{f}_{s}(t)$.


 \end{proof}

\subsection{TOD protocol}
In TOD protocol, the sensor node of  the slave robot $i\in\{1,...,N\}$ with the greatest weighted error
\begin{align}&\eta_i(t)=\hat{q}_{si}(s_{k-1})-q_{si}(s_k), i=1,...,N, \nonumber
\\&t\in[t_k,t_{k+1}),k\in\mathbb{N}, \hat{q}(s_{-1})=0,\label{eq:eta_TOD}
\end{align}
will be granted the access to the communication network. Differently from RR scheduling protocol, TOD scheduling protocol does not guarantee that each node will transmit once every $N$ transmissions.

\begin{definition}(TOD Protocol). Let $Q_i>0 (i=1, ..., N)$ be some weighting matrices.  At the sampling instant $s_k$, the weighted TOD protocol is a protocol for which the  measurements of  the slave robot  to be transmitted with the index $i_k^*$ is defined as any index that satisfies
\begin{eqnarray}
&&|\sqrt{Q_{i_k^*}}\eta_{i_k^*}(t)|^2\geq |\sqrt{Q}_i\eta_i(t)|^2,
\nonumber\\&&t\in[t_k,t_{k+1}), k\in\mathbb{N},i=1,...,N.\label{eq:Def_TOD}
\end{eqnarray}
\end{definition}
Here the weighting matrices $Q_{1},..., Q_N$ are variables to be designed. Therefore, the master controller (\ref{eq:taum}) under TOD protocol can be represented as
\begin{align}
\tau_m(t)=&-\frac{1}{N}[K_{i_k^*}^p(q_m(t)-q_{si_k^*}(s_k))+\sum_{i=1}^NK_i^d\dot{q}_m(t)\nonumber\\
&+\sum_{i=1,i\neq i_k^*}^NK_i^p(q_m(t)-\hat{q}_{si}(s_{k-1}))]
 +G_m(q_m(t)),\nonumber\\&t\in[t_k,t_{k+1}), k\in\mathbb{N}.\label{eq:taum_TOD}
\end{align}
Denote \[d(t)=t-s_k, t\in[t_k,t_{k+1}),k\in\mathbb{N},\]
then
\begin{equation}0\leq d(t)\leq MATI+MAD=h_M.\label{eq:D_Bound_TOD}\end{equation}


By time-delay approach, substituting (\ref{eq:taus}) and (\ref{eq:taum_TOD}) into (\ref{eq:master}-\ref{eq:slave}), one has the following dynamics:
\begin{equation}\label{eq:clo_system_TOD}\left\{
\begin{aligned}
&f_m(t)\!=\!
M_m(q_m(t))\ddot{q}_m(t)\!+\!C_m(q_m(t),\dot{q}_m(t))\dot{q}_m(t)
\\&\quad\quad\quad+\frac{1}{N}\sum_{i=1}^NK_i^d\dot{q}_m(t)-\frac{1}{N}\sum_{i=1,i\neq i_k^*}^NK_i^p\eta_i(t)\\&\quad\quad\quad+\frac{1}{N}\sum_{i=1}^NK_i^p(q_m(t)-q_{s_i}(t-d(t)),
\\
&f_{si}(t)\!=\!M_{si}(q_{si}(t))\ddot{q}_{si}(t)\!+\!C_{si}(q_{si}(t),\dot{q}_{si}(t))\dot{q}_{si}(t)\\
&\quad\quad\quad+K_i^d\dot{q}_{si}(t)+K_i^p(\check{q}_{si}(t)-q_m(t-d(t)),
\\
&\dot{\eta}_i(t)\!=\!0, i=1,...,N,t\in[t_k,t_{k+1}), k\in\mathbb{N}.
\end{aligned}
\right.\end{equation}
By (\ref{eq:update_qsi}), we obtain for $i=i_k^*\in\{1,2,...,N\}$,
\begin{equation}
\eta_i(t_{k+1})=\hat{q}_{si}(s_k)-q_{si}(s_{k+1})=q_{si}(s_k)-q_{si}(s_{k+1}),\label{eq:eta_update1_TOD}
\end{equation}
and for $i\neq i_k^*,  i\in\{1,2,...,N\}$
\begin{equation}
\eta_i(t_{k\!+\!1})\!=\!\hat{q}_{si}\!(s_k\!)\!-\!q_{si}\!(s_{k\!+\!1})\!=\!\eta_i(t_k)\!+\!q_{si}(s_k)\!-\!q_{si}(s_{k+1}).\label{eq:eta_update2_TOD}
\end{equation}
Thus, following \cite{freirich2016decentralized} the delayed reset system is given by
\begin{equation}\left\{
\begin{aligned}
&q_{si}(t_{k+1})=q_{si}(t_{k+1}^{-})\\
&\eta_i(t_{k+1})=[1-\delta(i,i_k^*)]\eta_{i}(t_k)+q_{si}(s_k)-q_{si}(s_{k+1})\\
&k\in\mathbb{N}, i={1, 2,..., N},
\end{aligned}
\right.\label{eq:resetcondition}
\end{equation}
where $\delta$ is Kronecker delta.  Summarizing (\ref{eq:clo_system_TOD}-\ref{eq:resetcondition}) we obtain the hybrid model of the closed-loop teleoperation system. The initial condition for (\ref{eq:clo_system_TOD}) and (\ref{eq:resetcondition}) has the form of $x(t)=[q_m^T(t),\dot{q}_m^{T}(t),\check{q}_{s1}^T(t),\dot{q}_{s1}^T(t),...,\check{q}_{sN}^T(t),\dot{q}_{sN}^T(t)]^T=\phi(t)$, $t\in[-h_M,0]$, $\phi(0)=x_0$, and $\eta_i(0)=-q_{si}(0)$, $\phi(t)$ is a continuous function on $[-h_M,0]$.

For the closed-loop model (\ref{eq:clo_system_TOD}) and (\ref{eq:resetcondition}) we now consider the following functional:
\begin{equation}
V_e(t)=V(t)+V_G(t)+\sum_{i=1}^{N}\eta^T_i(t)Q_i\eta_i(t)+W_e(t),\label{eq:Lyp_Ve_TOD}
\end{equation}
where $V(t)$ is defined in (\ref{eq:V_RR}) with $h_S$ replaced by $h_M$
 and
\begin{align}
W_e(t)&=\sum_{i=1,i\neq i_k^*}^N\frac{t_k-t}{t_{k+1}-t_k}\eta^T_i(t)U_i\eta_i(t),\label{eq:We}\\
V_G(t)&=\sum_{i=1}^Nh_M\int_{s_k}^t\dot{q}_{si}^T(s)G_{i}\dot{q}_{si}(s)ds.\label{eq:VG}
\end{align}
Here the term $\eta_i^T(t)Q_i\eta_i(t)\equiv \eta_i^T(t_k)Q_i\eta_i(t_k), t\in[t_k,t_{k+1}), i=1,2,...,N$
is piecewise-constant. $V(t)$ represents a Lyapunov functional for teleoperation systems with delay $d(t) \in[0,h_M]$.  Following  \cite{liu2015networkedSIAM} where an exponential form of $V_G$ has been introduced, the piecewise-continuous in time term $V_G$ in (\ref{eq:VG}) is used to cope with the delays in the reset conditions:
\begin{align}
&V_G(t_{k+1})-V_G(t_{k+1}^{-})
\nonumber\\&=h_M\sum_{i=1}^N\int_{s_{k+1}}^{t_{k+1}}\dot{q}_{si}^T(s)G_{i}\dot{q}_{si}(s)ds
\nonumber\\&\quad-h_M\sum_{i=1}^N\int_{s_{k}}^{t_{k+1}}\dot{q}_{si}^T(s)G_{i}\dot{q}_{si}(s)ds
\nonumber\\&=-h_M\sum_{i=1}^N\int_{s_{k}}^{s_{k+1}}\dot{q}_{si}^T(s)G_{i}\dot{q}_{si}(s)ds
\nonumber\\&\!\leq \!-\!\sum_{i=1}^N[q_{si}(s_{k})\!-\!q_{si}(s_{k\!+\!1})]^T\!G_i\![q_{si}(s_{k})\!-\!q_{si}(s_{k\!+\!1})].\label{eq:VG_jumps}
\end{align}
The $\eta_i$-related term $W_e$ which is non-positive was inspired by  \cite{freirich2016decentralized} to provide non-positive terms of $\eta_i$ in the derivative of $V_e$.
The function $V_e$ is thus continuous and differentiable over $[t_k,t_{k+1})$. The following lemma gives sufficient conditions for the positivity of $V_e$ and for the fact that it does not grow at the jumps $t_k$:

\begin{lemma}\label{lem:TOD}
If there exist $0<Q_i\in\mathbb{R}^{n\times n}$, $0<U_i\in\mathbb{R}^{n\times n}$ and $0<G_i\in \mathbb{R}^{n\times n}$, $i=1,2,...,N$ that satisfy the LMIs
\begin{align}
\Omega_i=\begin{bmatrix}-\frac{1}{N-1}Q_{i}\!+\!U_i&Q_i\\ * &-G_i\!+\!Q_i \end{bmatrix}<0, i=1,2...,N,
\label{eq:Omega_TOD}
\end{align}
then $V_e(t)$ of (\ref{eq:Lyp_Ve_TOD}) is positive in the sense that
\begin{equation}V_e(t)\geq \beta (|\dot{q}_m|^2+|\dot{q}_s|^2+|e|^2)\label{eq:Ve_positivity}\end{equation}
with
$\dot{q}_s\triangleq \text{col}\{\dot{q}_{s1},...,\dot{q}_{sN}\}$,
$e \triangleq \text{col}\{e_{1},...,e_{N}\}$,
$e_i \triangleq q_m-\check{q}_{si}, i=1,..., N,$
for some $\beta>0$. Moreover, $V_e$ does not grow at the jumps along with (\ref{eq:clo_system_TOD}) and (\ref{eq:resetcondition}):
\begin{equation}V_e(t_{k+1})-V_e(t_{k+1}^{-})\leq 0.\label{eq:Ve_ditui}\end{equation}
\end{lemma}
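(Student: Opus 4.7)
The plan is to split the proof into two independent parts, handling positivity (\ref{eq:Ve_positivity}) and the non-growth at jumps (\ref{eq:Ve_ditui}) separately, since the terms in $V_e$ play different roles in each part.

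For the positivity claim, I would start from the decomposition $V_e=V_1+V_2+V_3+V_G+\sum_i\eta_i^TQ_i\eta_i+W_e$. The first two pieces already deliver the desired quadratic lower bound: by Property P1, $V_1\geq N\lambda_m^m|\dot{q}_m|^2+\sum_i\lambda_{si}^m|\dot{q}_{si}|^2$, while $V_2\geq\sum_i\lambda_{\min}(K_i^p)|e_i|^2$. The pieces $V_3$ and $V_G$ are manifestly non-negative, so the only issue is showing that the $\eta$-dependent part $\sum_i\eta_i^TQ_i\eta_i+W_e$ is non-negative. Since the coefficient $(t_k-t)/(t_{k+1}-t_k)$ of $W_e$ lies in $[-1,0]$ for $t\in[t_k,t_{k+1})$, I would use the bound $W_e(t)\geq -\sum_{i\neq i_k^*}\eta_i^TU_i\eta_i$ to obtain
\begin{equation*}
\sum_{i=1}^N\eta_i^TQ_i\eta_i+W_e(t)\;\geq\;\eta_{i_k^*}^TQ_{i_k^*}\eta_{i_k^*}+\sum_{i\neq i_k^*}\eta_i^T(Q_i-U_i)\eta_i,
\end{equation*}
and then read off from the $(1,1)$-block of $\Omega_i<0$ that $U_i<\tfrac{1}{N-1}Q_i\leq Q_i$, so each $Q_i-U_i$ is positive definite. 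Choosing $\beta$ as the minimum of the relevant eigenvalues completes this part.

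For the non-growth at jumps, I would track how each summand of $V_e$ changes across $t=t_{k+1}$. The robot states $q_m,\dot{q}_m,q_{si},\dot{q}_{si}$ are continuous, so $V_1,V_2,V_3$ contribute nothing. Using (\ref{eq:VG_jumps}), $V_G$ drops by at least $\sum_i\xi_i^TG_i\xi_i$, where $\xi_i\triangleq q_{si}(s_k)-q_{si}(s_{k+1})$. For the piecewise-constant piece, the reset rule (\ref{eq:resetcondition}) gives $\eta_{i_k^*}(t_{k+1})=\xi_{i_k^*}$ and $\eta_i(t_{k+1})=\eta_i(t_k)+\xi_i$ for $i\neq i_k^*$, yielding
\begin{equation*}
\Delta\!\sum_i\eta_i^TQ_i\eta_i=\xi_{i_k^*}^TQ_{i_k^*}\xi_{i_k^*}-\eta_{i_k^*}^T(t_k)Q_{i_k^*}\eta_{i_k^*}(t_k)+\sum_{i\neq i_k^*}\bigl(2\eta_i^T(t_k)Q_i\xi_i+\xi_i^TQ_i\xi_i\bigr).
\end{equation*}
The coefficient of $W_e$ jumps from $-1$ to $0$, so $\Delta W_e=\sum_{i\neq i_k^*}\eta_i^T(t_k)U_i\eta_i(t_k)$. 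Here I would invoke the TOD selection rule (\ref{eq:Def_TOD}): since $\eta_{i_k^*}^TQ_{i_k^*}\eta_{i_k^*}\geq\eta_i^TQ_i\eta_i$ for every $i\neq i_k^*$, summing over the $N-1$ competitors gives $\eta_{i_k^*}^TQ_{i_k^*}\eta_{i_k^*}\geq\tfrac{1}{N-1}\sum_{i\neq i_k^*}\eta_i^TQ_i\eta_i$.

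Assembling the pieces and grouping the $(i_k^*)$-term against $-\xi_{i_k^*}^TG_{i_k^*}\xi_{i_k^*}$ (which is $\leq 0$ by the $(2,2)$-block of $\Omega_i<0$), the remaining $i\neq i_k^*$ terms take the exact form $[\eta_i(t_k)^T\ \xi_i^T]\,\Omega_i\,[\eta_i(t_k)^T\ \xi_i^T]^T\leq 0$, which delivers (\ref{eq:Ve_ditui}). The main obstacle I anticipate is the bookkeeping in this last step: one must combine the TOD inequality with the $V_G$ drop and with $\Delta W_e$ \emph{in exactly the right proportions} so that the residual assembles into the block matrix $\Omega_i$. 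Getting the factor $\tfrac{1}{N-1}$ on the $Q_i$-diagonal of $\Omega_i$ matched to the sum over $N-1$ dormant nodes is the delicate accounting that makes the design of $U_i$ and $G_i$ via the LMI (\ref{eq:Omega_TOD}) possible.
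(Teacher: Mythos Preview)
Your proposal is correct and follows essentially the same route as the paper: positivity via $U_i<\tfrac{1}{N-1}Q_i\leq Q_i$ from the $(1,1)$-block of $\Omega_i$, and the jump estimate via the TOD averaging inequality combined with the $V_G$ drop and $\Delta W_e$ so that the dormant-node terms assemble into $\zeta_i^T\Omega_i\zeta_i$ while the active-node term is handled by the $(2,2)$-block $Q_{i_k^*}-G_{i_k^*}<0$. The bookkeeping you flag as the delicate step is exactly what the paper does, with $\zeta_i=\text{col}\{\eta_i(t_k),\,q_{si}(s_k)-q_{si}(s_{k+1})\}$.
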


\begin{proof}
It can be seen that (\ref{eq:Omega_TOD}) implies that for $i=1, ..., N$,
\[ U_i<\frac{1}{N-1}Q_i\leq Q_i,\]
thus for all $t\in[t_k,t_{k+1})$ we have
\begin{eqnarray*}
\sum_{i=1}^N\eta_i^T(t)Q_i\eta_i(t)+W_e(t)\geq  \eta_{i_k^*}^TQ_{i_k^*}\eta_{i_k^*}\geq 0,
\end{eqnarray*}
which
yields (\ref{eq:Ve_positivity}).
Now we show that $V_e$ does not grow at the jumps. Since $V(t_{k+1})=V(t_{k+1}^-)$ and $\eta(t_{k+1}^-)=\eta(t_k)$, by taking into account (\ref{eq:VG_jumps}) we obtain \begin{align*}
&V_e(t_{k+1})-V_e(t_{k+1}^{-})\\&\leq\sum_{i=1}^N\eta^T_i(t_{k+1})Q_i\eta_i(t_{k+1})-\sum_{i=1}^N\eta^T_i(t_{k})Q_i\eta_i(t_{k})
\\&\quad\!+\!\sum_{i=1,i\neq {i_k^*}}\eta_i(t_k)^TU_i\eta_i(t_k)+V_G(t_{k+1})-V_G(t_{k+1}^{-})
\\&\leq \eta^T_{i_k^*}(t_{k+1})Q_{i_k^*}\eta_{i_k^*}(t_{k+1})
\\&\quad\!+\sum_{i=1,i\neq i_k^*}^N[\eta^T_i(t_{k+1})Q_i\eta_i(t_{k+1})-\eta_i^T(t_k)(Q_i-U_i)\eta_i(t_k)]
\\&\quad-\eta_{i_k^*}^T(t_k)Q_{i_k^*}\eta^T_{i_k^*}(t_k)\!+V_G(t_{k+1})\!-V_G(t_{k+1}^{-}).
\end{align*}
Note that under TOD protocol
\begin{equation*}
-\eta^T_{i_k^*}(t_k)Q_{i_k^*}\eta_{i_k^*}(t_k)\leq -\frac{1}{N-1}\sum_{i=1,i\neq i_k^*}^N\eta^T_{i}(t_k)Q_{i}\eta_{i_k}(t_k).
\end{equation*}
Denote $\zeta_i(t)=\text{col}\{\eta_i(t),q_{si}(s_{k})-q_{si}(s_{k+1})\}$. By using (\ref{eq:eta_update1_TOD}-\ref{eq:eta_update2_TOD}), we arrive at
\begin{eqnarray*}
V_e(t_{k+1})-V_e(t_{k+1}^{-})&\leq& -\eta^T_{i_k^*}(t_{k+1})(G_{i_k^*}-Q_{i_k^*})\eta_{i_k^*}(t_{k+1})\\&&-\sum_{i=1,i\neq i_k^*}^N\zeta_i^T\Omega_i\zeta_i.
\end{eqnarray*}
This completes the proof.
\end{proof}
By applying Lemma \ref{lem:TOD} and the standard arguments for the stability analysis, we obtain the following result.
\begin{theorem}\label{Thm:TOD}
Consider the hybrid closed-loop system (\ref{eq:clo_system_TOD}) and (\ref{eq:resetcondition}) under TOD scheduling protocol (\ref{eq:Def_TOD}). If there exist  $n\times n$ matrices $ R_m>0, R_{si}>0, G_i>0, Q_i>0, U_i>0$ that satisfy the LMIs (\ref{eq:Omega_TOD}) and

\begin{align}
 &\Sigma_i=
 \begin{bmatrix}
\bar{\Pi}_i&L_i^T\\
* &\Psi_i
 \end{bmatrix}<0,i=1, 2,...,N\label{eq:LMI_TOD}
 \end{align}
with
\begin{align*}
\bar{\Pi}_i&=\Pi_i+h_MD_1^TG_iD_1,\\
L_i&=[1,0,0,0]\otimes \text{col}_{i=1,...,N}\{(K_i^p)^T,j\neq i\},\\
\Psi_i&=-h_M^{-1}\text{diag}_{i=1,...,N}\{U_j,j\neq i\},\\
D_1&=[0_n,I_n,0_n,0_n],
\end{align*}
and $\Pi_i$ given in (\ref{eq:LMI_RR}) with $h_S$ replaced by $h_M$,
then the following claims hold\begin{enumerate}
\item if the SMMS teleoperation system (\ref{eq:master}-\ref{eq:slave}) is in free motion, that is, $f_m(t)=f_{si}(t)\equiv 0$, then all the signals are bounded and the position coordination errors and velocities asymptotically converge to zero, that is, $\lim_{t\rightarrow \infty} q_m(t)-\check{q}_{si}(t)=\lim_{t\rightarrow \infty} \dot{q}_m(t)=\lim_{t\rightarrow \infty} \dot{q}_{si}(t)=0$, which implies that $q_m\rightarrow \bar{q}_s$ and $q_{si}\rightarrow q_m+\gamma_i$ as $t\rightarrow \infty$. \label{it:pro_item1_TOD}
\item if $f_m\in \mathcal{L}_\infty,f_{si}\in\mathcal{L}_\infty$, then $\dot{q}_m(t), \dot{q}_{si}(t)\in\mathcal{L}_\infty$ for all $t\geq 0$.
\item if $f_m\in \mathcal{L}_2,f_{si}\in\mathcal{L}_2$, then all the signals are bounded and $\lim_{t\rightarrow \infty} q_m(t)-\check{q}_{si}(t)=\lim_{t\rightarrow \infty} \dot{q}_m(t)=\lim_{t\rightarrow \infty} \dot{q}_{si}(t)=0$, which implies that $q_m\rightarrow \bar{q}_s$ and $q_{si}\rightarrow q_m+\gamma_i$ as $t\rightarrow \infty$.
\item the force tracking is guaranteed as the teleoperation system is in steady-state, i.e., $f_m=-\bar{f}_s$.
\end{enumerate}

\end{theorem}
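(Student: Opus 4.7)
The plan is to mimic the structure of the proof of Theorem~\ref{Thm:RR}, but using the augmented Lyapunov functional $V_e(t)$ in (\ref{eq:Lyp_Ve_TOD}) and exploiting Lemma~\ref{lem:TOD} to control the behaviour at the transmission instants $t_k$. Lemma~\ref{lem:TOD} already gives us two ingredients for free: the positivity estimate (\ref{eq:Ve_positivity}), which will let us convert bounds on $V_e$ into bounds on $\dot{q}_m$, $\dot{q}_{si}$ and $e_i$; and the non-growth property (\ref{eq:Ve_ditui}), which means that any decay of $V_e$ on the open intervals $[t_k,t_{k+1})$ carries over to the whole half-line.

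The core step is to differentiate $V_e$ on $[t_k,t_{k+1})$ along the trajectories of (\ref{eq:clo_system_TOD}). Because $\dot\eta_i(t)=0$ on such intervals, the $\sum_i \eta_i^T Q_i \eta_i$ term contributes nothing and $\dot W_e$ produces only the non-positive terms $-\frac{1}{t_{k+1}-t_k}\sum_{i\neq i_k^*}\eta_i^T U_i \eta_i$, which I will upper-bound using $t_{k+1}-t_k\le h_M$ to obtain $-h_M^{-1}\sum_{i\neq i_k^*}\eta_i^T U_i \eta_i$. The $V_1+V_2+V_3$ part is handled exactly as in the RR case, but with the single delay $d(t)\in[0,h_M]$ and with the extra cross term $-\frac{2}{N}\sum_{i\neq i_k^*}\dot q_m^T K_i^p \eta_i$ coming from the perturbation $-\frac{1}{N}\sum_{i\neq i_k^*}K_i^p\eta_i$ in the master dynamics. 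The term $V_G$ of (\ref{eq:VG}) contributes $h_M\sum_i\dot q_{si}^T G_i\dot q_{si}$, which is precisely what produces the extra $h_MD_1^T G_iD_1$ block in $\bar\Pi_i$. Using the representations (\ref{eq:positionerror_transform1})--(\ref{eq:positionerror_transform2}) (now with $d_m=d_{si}=d(t)$) and the Jensen-type bound on the integral term of $\dot V_3$, I can cast the result as
\begin{equation*}
\dot V_e(t)\le 2N\dot q_m^T f_m+\sum_{i=1}^N 2\dot q_{si}^T f_{si}+\sum_{i=1}^N\xi_i^T \Sigma_i\,\xi_i,
\end{equation*}
where $\xi_i=\mathrm{col}\{\chi_i(t),\eta_1,\ldots,\widehat{\eta_i},\ldots,\eta_N\}$ augments $\chi_i$ of (\ref{eq: Xi}) with the off-index $\eta_j$'s and $\Sigma_i$ is exactly the matrix in (\ref{eq:LMI_TOD}); the off-diagonal block $L_i$ arises from the $\dot q_m^T K_j^p\eta_j$ cross terms and $\Psi_i$ from the $\dot W_e$ contribution, a Schur complement making them explicit.

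With the LMIs (\ref{eq:LMI_TOD}) and (\ref{eq:Omega_TOD}) in force, $\Sigma_i<0$ gives $\dot V_e\le -\sum_i\delta_i|\xi_i|^2+2N\dot q_m^T f_m+2\sum_i\dot q_{si}^T f_{si}$ on every interval $[t_k,t_{k+1})$, and Lemma~\ref{lem:TOD} ensures $V_e(t_{k+1})\le V_e(t_{k+1}^-)$. Integrating over $[0,t]$ and summing the telescoping jump contributions therefore yields, for free motion, $V_e(t)+\sum_i\delta_i\int_0^t|\xi_i(s)|^2\,ds\le V_e(0)$; for $f_m,f_{si}\in\mathcal L_\infty$ the Young inequality $2a^Tb\le \tfrac{\delta_i}{2}|a|^2+\tfrac{2}{\delta_i}|b|^2$ gives an ultimate bound, and for $f_m,f_{si}\in\mathcal L_2$ the same manipulation shows $V_e\in\mathcal L_\infty$ and $\xi_i\in\mathcal L_2$. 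Combined with (\ref{eq:Ve_positivity}) this delivers the boundedness claims (2) and the $\mathcal L_\infty$ conclusions of (1) and (3).

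Convergence (Claims 1 and 3) is then obtained by Barbalat's lemma applied first to $\dot q_m,\dot q_{si}$ and then, after differentiating the closed-loop system one more time, to $\ddot q_m,\ddot q_{si}$, following verbatim the argument after (\ref{eq:V_ditui_RR}) in the proof of Theorem~\ref{Thm:RR}, with $d_{si}(t)$ replaced by the common $d(t)$; the transformations (\ref{eq:positionerror_transform1})--(\ref{eq:positionerror_transform2}) then force $|q_m(t)-\check q_{si}(t)|\to 0$. Claim 4 is proved exactly as in Theorem~\ref{Thm:RR}: setting $\dot q_m,\dot q_{si},\ddot q_m,\ddot q_{si}$ to zero in (\ref{eq:clo_system_TOD}) and using $\eta_i\to 0$ (which follows from the reset rule (\ref{eq:resetcondition}) once $\dot q_{si}\to 0$) collapses the equations to $f_m=-\bar f_s$. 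The main obstacle is the construction of $\Sigma_i$ and $W_e$ so that the cross terms $\dot q_m^T K_j^p\eta_j$ are absorbed by Schur complement against $-h_M^{-1}U_j$ from $\dot W_e$, while the reset condition (\ref{eq:resetcondition}) is handled by $V_G$; verifying the matrix manipulation carefully is the most delicate bookkeeping step.
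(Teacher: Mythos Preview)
Your proposal is correct and follows essentially the same route as the paper: differentiate $V_e$ on $[t_k,t_{k+1})$, use $t_{k+1}-t_k\le h_M$ on $\dot W_e$, collect the $\dot q_m^T K_j^p\eta_j$ cross terms and the $h_M\dot q_{si}^TG_i\dot q_{si}$ contribution from $V_G$ into the augmented quadratic form $\xi_i^T\Sigma_i\xi_i$, then invoke Lemma~\ref{lem:TOD} for positivity and non-growth at jumps before integrating and applying Barbalat. One bookkeeping slip to fix when you write it out: because $V_1$ carries the prefactor $N$ on the master kinetic energy, the cross term is $+2\sum_{j\neq i_k^*}\dot q_m^T K_j^p\eta_j$ (the $1/N$ cancels and the sign flips), not $-\tfrac{2}{N}\sum\cdots$; this is exactly what matches the off-diagonal block $L_i$ in $\Sigma_i$.
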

\begin{proof}
Choose the Lyapunov-Krasovskii functional (\ref{eq:Lyp_Ve_TOD}). Calculating the derivative of $V_e$ along with the trajectory of system (\ref{eq:clo_system_TOD}) and (\ref{eq:resetcondition}) for $t\in[t_k, t_{k+1})$, $k\in\mathbb{N}$, we have

\begin{eqnarray*}
\dot{V}_e(t)&\leq& \sum_{i=1}^3\dot{V}_i(t)+\sum_{i=1}^Nh_M\dot{q}^T_{si}(t)G_i\dot{q}_{si}(t)\\&&-\sum_{i=1,i\neq i_k^*}\frac{1}{t_{k+1}-t_k}\eta_i^T(t)U_i\eta_i(t)
\\&\leq&\sum_{i=1}^3\dot{V}_i(t)+\sum_{i=1}^Nh_M\dot{q}^T_{si}(t)G_i\dot{q}_{si}(t)\\&&-\sum_{i=1,i\neq i_k^*}\frac{1}{h_M}\eta_i^T(t)U_i\eta_i(t)
\\&=&2N\dot{q}_m^T(t)f_m(t)+\sum_{i=1}^N[2\dot{q}_{si}^T(t)f_{si}(t)\\&&+\chi_i(t)\Pi_i\chi_i(t)+h_M\dot{q}_{si}^T(t)G_i\dot{q}_{si}(t)]
\\&&\sum_{i=1,i\neq i_k^*}[-\frac{1}{h_M}\eta_i^T(t)U_i\eta_i(t)\!+\!2\dot{q}_m^T(t)K_i^p\eta_i(t)],
\end{eqnarray*}
where $\Pi_i$ is given  in (\ref{eq:LMI_RR}) with $h_S$ replaced by $h_M$, and $\chi_i$ is given in (\ref{eq: Xi}) with $d_m(t), d_{si}(t)$ replaced by $d(t)$.
  Define
  \begin{align*}
  \xi_i(t)=&\text{col}\{\chi_i(t),\rho_{i_k^*}(t)\},
  \\=&\text{col}\{\dot{q}_m\!(t\!),\dot{q}_{si}\!(t\!),\int_{t\!-\!d(t)}^t\dot{q}_m\!(s\!)ds,\int_{t\!-\!d(t)}^t\dot{q}_{si}\!(s\!)ds,\rho_{i_k^*}\!(t\!)\},\\
  \rho_{i}(t)=&\text{col}_{i=1,2,...,N}\{\eta_j(t),j\neq i\}.
  \end{align*}
We arrive at
\begin{align}
\dot{V}_e(t)\leq &\sum_{i=1}^N [2\dot{q}_{si}^T(t)f_{si}(t)\!+\!\xi_i^T(t)\Sigma_i\xi_i(t)]
\nonumber\\& +2N\dot{q}_m^T(t)f_m(t), t\in[t_k,t_{k+1}).\label{eq:dotVe_final}
\end{align}

If $f_m(t)=f_{si}(t)\equiv 0$, then the LMI condition (\ref{eq:LMI_TOD}) implies that
\begin{equation}\label{eq:dotV_TOD_noforce}
{\dot V_e}\!(t\!) \!\le\! \sum\limits_{i = 1}^N {\xi _i^T\!(t\!)\Sigma_i {{\xi _i}\!(t\!) \!\le\!  \!-\! \sum\limits_{i = 1}^N {{\mu _i}{{\left| {{\xi _i}\!(t\!)} \right|}^2} \!\le\! 0,t \!\in\! \![{t_k},{t_{k + 1}}\!)} } }
\end{equation}
where $\mu_i=-\lambda_{\max}(\Sigma_i)$.
By (\ref{eq:Ve_ditui}) and (\ref{eq:dotV_TOD_noforce}), we have
\[V_e(t)\leq V_e(t_k)\leq V_e(t_k^-)\leq \cdots\leq V_e(0).\]
Integrating both sides of (\ref{eq:dotV_TOD_noforce}) from $t_{k}$ to $t$, $t\in[t_k,t_{k+1})$,
\begin{equation}
V_e(t)\leq V_e(t_k)-\sum_{i=1}^N\mu_i\int_{t_k}^t|\xi_i(t)|^2ds.\label{eq:V_ditui_TOD}\end{equation}

The inequality (\ref{eq:V_ditui_TOD}) with $t=t_{k+1}^-$ implies that
\begin{eqnarray}V_e(t_{k+1})&\leq& V_e(t_k)-\sum_{i=1}^N\mu_i\int_{t_k}^{t_{k+1}}|\xi_i(s)|^2ds\nonumber\\
&\leq&V_e(t_{k-1})-\sum_{i=1}^N\mu_i\int_{t_{k-1}}^{t_{k+1}}|\xi_i(s)|^2ds\nonumber\\
&\vdots&\nonumber\\
&\leq&V_e(0)-\sum_{i=1}^N\mu_i\int_{0}^{t_{k+1}}|\xi_i(s)|^2ds. \label{eq:Vtk_ditui_TOD}
\end{eqnarray}

 Replacing in (\ref{eq:Vtk_ditui_TOD}) $k+1$ by $k$ and using (\ref{eq:V_ditui_TOD}), we arrive at
 \begin{equation}V_e(t)\leq V_e(0)-\sum_{i=1}^N\mu_i\int_{0}^{t}|\xi_i(s)|^2ds),t\in[t_k,t_{k+1}).\label{eq:Vt_TOD}\end{equation}

Letting $k\rightarrow\infty$ clearly implies that $\xi_i\in\mathcal{L}_2$ and $V_e(t)\in\mathcal{L}_\infty$ for all $t\geq 0$ and thus $\dot{q}_m,\dot{q}_{si}, \int_{t-d(t)}^t\dot{q}_m(s)ds,\int_{t-d(t)}^t\dot{q}_{si}(s)ds, \rho_{i_k^*}\in\mathcal{L}_2$. The fact (\ref{eq:Ve_positivity}) shows that $q_m-\check{q}_{si},\dot{q}_m,\dot{q}_{si}\in\mathcal{L}_\infty$, thus together with (\ref{eq:eta_TOD}), one has that $\eta_i\in\mathcal{L}_\infty\cap\mathcal{L}_2$. Now by
  \begin{align}&q_m(t)-\check{q}_{si}(t-d(t))=q_m(t)-\check{q}_{si}(t)+\int_{t-d(t)}^t\dot{q}_{si}(\delta)d\delta,
  \nonumber\\
 & t\in[t_k,t_{k+1}),k\in\mathbb{N},\label{eq:positionerror_transform1_TOD}\end{align}
and
\begin{align}&\check{q}_{si}(t)-q_m(t-d(t))=\check{q}_{si}(t)-q_m(t)+\int_{t-d(t)}^t\dot{q}_m(\delta)d\delta,
\nonumber\\&t\in[t_k,t_{k+1}),k\in\mathbb{N},\label{eq:positionerror_transform2_TOD}\end{align}
and the closed-loop dynamics (\ref{eq:clo_system_TOD}) and Properties P1-P4, we know that $\ddot{q}_m,\ddot{q}_{si}\in\mathcal{L}_\infty$, thus by Barbalat's Lemma, we have $\lim_{t\rightarrow\infty}\dot{q}_m(t)=\lim_{t\rightarrow \infty}\dot{q}_{si}(t)=0$.

 Furthermore, following the same line of reasoning in the proof of Theorem \ref{Thm:RR}, the closed-loop dynamics (\ref{eq:clo_system_TOD}) implies that $\dddot{q}_m,\dddot{q}_{si}\in \mathcal{L}_\infty$ since $\dot{q}_m,\ddot{q}_m,\dot{q}_{si},\ddot{q}_{si},\eta_i\in\mathcal{L}_\infty$ .  Thus $\lim_{t\rightarrow\infty}\ddot{q}_m(t)=\ddot{q}_{si}(t)=0$ by Barbalat's Lemma. Now by (\ref{eq:clo_system_TOD}), it can be established that $|q_m(t)-\check{q}_{si}(t-d(t))|, |\check{q}_{si}(t)-q_{m}(t-d(t))|\rightarrow 0$ as $t\rightarrow 0$, and thus $|q_m-\check{q}_{si}|\rightarrow 0$ as $t\rightarrow \infty$.


 If $f_m, f_{si}\in \mathcal{L}_\infty$, that is, there exist positive constants $\Delta_m,\Delta_{si}$ such that $|f_m(t)|\leq \Delta_m, |f_{si}(t)|\leq \Delta_{si}$, then the derivative of the Lyapunov functional (\ref{eq:Lyp_Ve_TOD}) is given by
 \begin{equation}\label{eq:dV_simple_TOD}
 {{\dot V}_e}(t) \!\le\! \sum\limits_{i = 1}^N {\!-\! \frac{{{\mu _i}}}{2}} {\left| {{\xi _i}\!(\!t\!)\!} \right|^2} \!+\! \frac{2}{{{\mu _i}}}\!(\!{\left| {{f_m}\!(\!t\!)\!} \right|^2} \!+\! {\left| {{f_{si}}\!(\!t\!)\!} \right|^2}\!),t \!\in\! [{t_k},{t_{\!k\!+\!1\!}})
 \end{equation}

Integrating both sides of (\ref{eq:dV_simple_TOD})
 from $t_k$ to $t$, we have

\begin{IEEEeqnarray}{l}V_e(t)+\sum_{i=1}^N\int_{t_k}^t\frac{\mu_i}{2}|\xi_i(s)|^2ds
\nonumber\\\leq\sum_{i=1}^N\int_{0}^t\frac{2}{\mu_i}(|f_m(s)|^2+|f_{si}(s)|^2)ds+V_e(0)
\nonumber\\\quad
-\sum_{i=1}^N\int_{0}^{t_k}\frac{\mu_i}{2}|\xi_i(s)|^2ds,
\label{eq:dV_ditui_TOD}\end{IEEEeqnarray}
thus
\begin{align*}
V_e\!(t\!)\!\leq\!& V_e\!(0\!)
\!-\!\sum_{i=1}^N\int_{0}^t(\frac{\mu_i}{2}|\xi_i\!(s\!)|^2\!-\!\frac{2}{\mu_i}(|f_m(s)|^2\!+\!|f_{si}(s)|^2))ds,\\
&t\in[t_k,t_{k+1}).
\end{align*}
Thus, together with  (\ref{eq:Ve_positivity}),  one has $\beta (|\dot{q}_m|^2+|\dot{q}_s|^2+|e|^2) \leq V_e(t)\leq V_e(0)$ for all ${\xi _i}(t) > \frac{2}{{{\mu _i}}}\sqrt {(\Delta _m^2 + \Delta _{si}^2)}$, which yields that $\dot{q}_m,\dot{q}_{si},q_m-\check{q}_{si}\in\mathcal{L}_\infty, i=1,2,...,N$.
By now, Claim 2 is established.

 If $f_m,f_{si}\in\mathcal{L}_2$, then (\ref{eq:dV_ditui_TOD}) implies that 
 \begin{equation*}
  V_e(t)\!+\!\sum_{i=1}^N\int_{0}^t\frac{\mu_i}{2}|\xi_i(s)|^2ds\leq \sum_{i=1}^N\frac{2}{\mu_i}(\|f_m\|_2^2\!+\!\|f_{si}\|_2^2)\!+\!V_e(0),
 \end{equation*}
 thus $V_e(t)\in \mathcal{L}_\infty$, $\xi_i\in\mathcal{L}_2$, then $\dot{q}_m, \dot{q}_{si}$, $q_m-\check{q}_{si}, \eta_i\in\mathcal{L}_\infty$ and $\dot{q}_m$, $\dot{q}_{si}$,  $\int_{t-d(t)}^t\dot{q}_m(s)ds$, $\int_{t-d(t)}^t\dot{q}_{si}(s)ds, \rho_{i_k^*}\in\mathcal{L}_2$, hence with (\ref{eq:eta_TOD}) one has that $\eta_i\in\mathcal{L}_\infty\cap\mathcal{L}_2$. Thus by (\ref{eq:positionerror_transform1_TOD}-\ref{eq:positionerror_transform2_TOD}), we know that $q_m-\check{q}_{si}(t-d_{si}(t)),\check{q}_{si}-q_m(t-d_m(t))\in\mathcal{L}_\infty$. By the closed-loop dynamics (\ref{eq:clo_system_TOD}), we have $\ddot{q}_m,\ddot{q}_{si}\in\mathcal{L}_\infty$.
The remainder of the proof for Claim 3 can be obtained by following the same line of reasoning for Claim 1.

The proof of the last claim is obvious and is thus omitted here.

  \end{proof}

  \section{Simulations}\label{sec:simulation}
  In this section, numerical examples and simulation results are given to verify the effectiveness of the proposed results.
  \subsection{Numerical studies}
  Consider a  teleoperation system (\ref{eq:master}-\ref{eq:slave})
 with  $N=2$ and $N=3$, respectively.
  To verify the results of Theorem~\ref{Thm:RR}  and Theorem~\ref{Thm:TOD},  the max.  allowable $\!MATI\!$ is calculated by fixing
 the upper bound of communication delays $MAD$ as $0$s, $0.2$s and $0.5$s, respectively.
 For simplicity, we first analyze the stability of the closed-loop system with $K_i^p=K_i^d=20I, i=m,s1,s2,...,sN$. The max. allowable values of $MATI$ that preserve the stability are given in Table~\ref{tab:MATI_sim_N2N3} and the resulting max. $h_S$ in Theorem~\ref{Thm:RR} and $h_M$ in Theorem~\ref{Thm:TOD} are shown in Table~\ref{tab:h_sim_N2N3}. From Table~\ref{tab:MATI_sim_N2N3}, we find that the resulting max. $MATI$  under RR protocol is larger than the one under TOD protocol. Besides, the max. allowable $MATI$  under each scheduling protocol decreases with the increase of $N$.  We  further fix the number of slaves as $N=3$, and analyze the stability of the closed-loop system with $K_1^p=10I$, $K_2^p=20I$, $K_3^p=30I$, and $K_i^d=20I$ for $i=m, s1, s2, s3$.  The max. allowable values of $MATI$ that preserve the stability are given in Table~\ref{tab:MATI_N3} and the resulting max. $h_S$ and $h_M$ are shown in Table~\ref{tab:h_N3}. From both of Table~\ref{tab:MATI_sim_N2N3} and Table~\ref{tab:MATI_N3}, it is found that the max. allowable $MATI$ decreases under the scheduling protocols with the increase of the upper bounds of the communication delays.


\begin{table}[h!]\centering
\caption{max. allowable $MATI$ for $K_i^p=20I$, $K_i^d=20I$, $i=m, s1, .., sN$}
\begin{tabular}{cccc@{}cccc}
\hline
\multirow{2}{*}{$MAD$}& \multicolumn{3}{c}{$N=2$}&&\multicolumn{3}{c}{$N=3$}\\
\cline{2-4}\cline{6-8}
& 0& 0.2&0.5 && 0& 0.2&0.5\\
\hline
Theorem \ref{Thm:RR} &0.6666& 0.5333 &0.3333&&0.5 & 0.4 &0.25  \\
Theorem \ref{Thm:TOD} &0.4531 & 0.2431 &-&&0.2411 & 0.0411 &- \\
\hline
\end{tabular}\label{tab:MATI_sim_N2N3}
\end{table}

\begin{table}[h!]\centering
\caption{max. time delays  for $K_i^p=20I$, $K_i^d=20I$, $i=m, s1,..., sN$}
\begin{tabular}{cccc@{}cccc}
\hline
\multirow{2}{*}{$h_S/h_M$}& \multicolumn{3}{c}{$N=2$}&&\multicolumn{3}{c}{$N=3$}\\
\cline{2-4}\cline{6-8}
& 0& 0.2&0.5 && 0& 0.2&0.5\\
\hline
$h_S$ in &\multirow{2}{*}{1.3332}&\multirow{2}{*}{1.2666}&\multirow{2}{*}{1.6666}&&\multirow{2}{*}{1.5} & \multirow{2}{*}{1.4}&\multirow{2}{*}{1.25}\\Theorem~\ref{Thm:RR}  \\
$h_M$ in&\multirow{2}{*}{0.4531}&\multirow{2}{*}{0.4531}&\multirow{2}{*}{-}&&\multirow{2}{*}{0.2411} & \multirow{2}{*}{0.2411} &\multirow{2}{*}{-} \\ Theorem~\ref{Thm:TOD}\\
\hline
\end{tabular}\label{tab:h_sim_N2N3}
\end{table}

\begin{table}[h!]\centering
\caption{max. allowable $MATI$ for $K_1^p=10I$, $K_2^p=20I$, $K_3^p=30I$, $K_i^d=20I$, $i=m, s1, s2, s3$}\begin{tabular}{cccc}
\hline
$MAD$ & 0& 0.2&0.5  \\
\hline
Theorem \ref{Thm:RR} &0.3333 & 0.2333 &0.1 \\
Theorem \ref{Thm:TOD}&0.2066 & 0.0066 &-\\
\hline
\end{tabular}\label{tab:MATI_N3}
\end{table}

\begin{table}[h!]\centering
\caption{max. allowable time delays for $K_1^p=10I$, $K_3^p=20I$, $K_3^p=30I$, $K_i^d=20I$, $i=m, s1, s2, s3$}
\begin{tabular}{cccc}
\hline
$h_S/h_M$ & 0& 0.2&0.5  \\
\hline
$h_S$ in Theorem \ref{Thm:RR} &1 & 0.8999&0.8 \\

$h_M$ in Theorem \ref{Thm:TOD} &0.2066 & 0.2066 &-\\
\hline
\end{tabular}\label{tab:h_N3}
\end{table}

\begin{figure}[t!]
  \centering
  \includegraphics[width=0.8\linewidth]{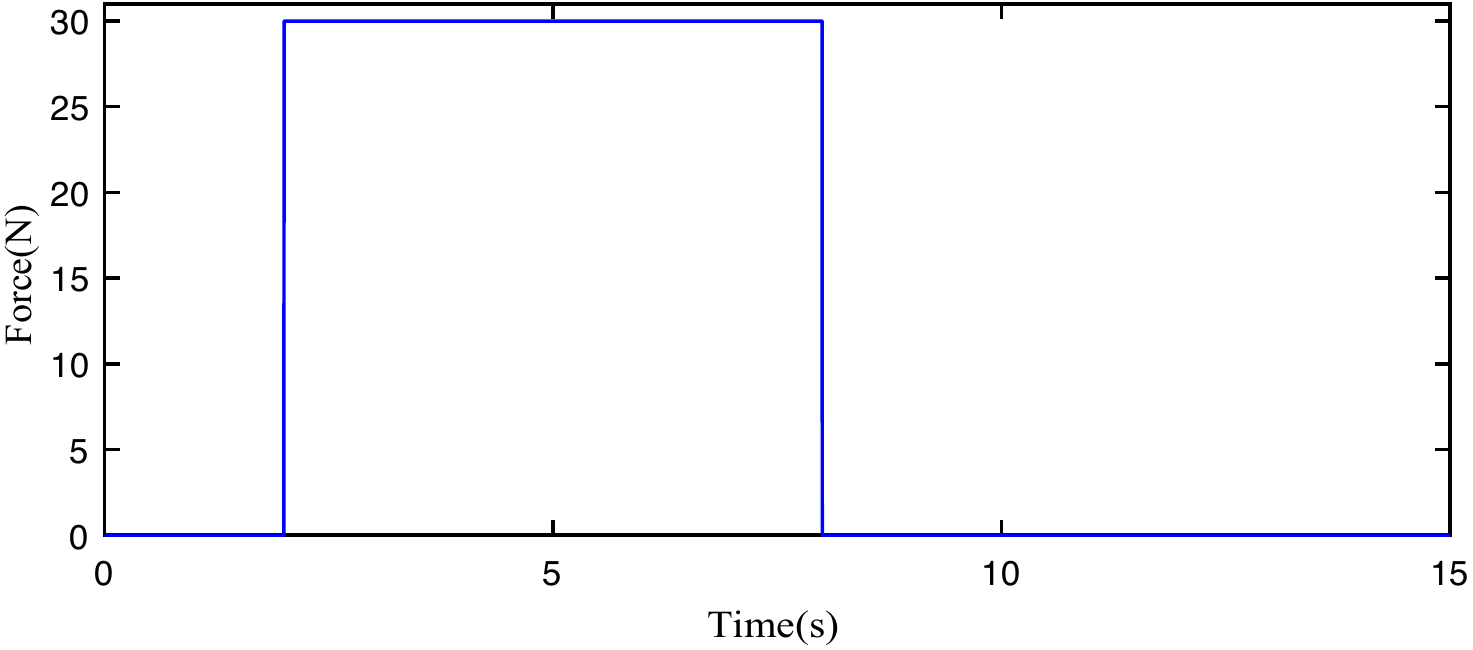}\\
  \caption{External force $F_2(t)$}.\label{fig:f}
\end{figure}

\subsection{Simulation results}
\begin{figure}[t!]\centering
\begin{minipage}{0.48\linewidth}
\centerline{ \includegraphics[width=\linewidth]{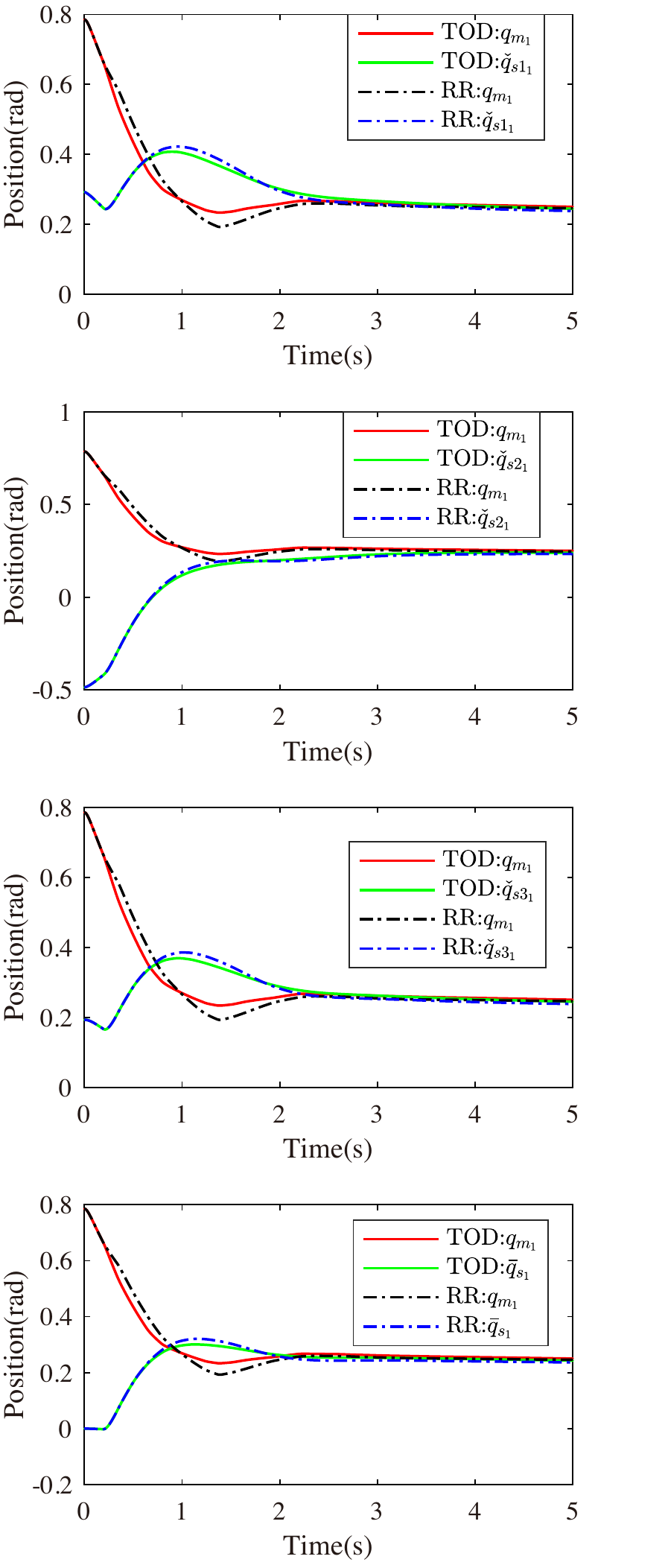}}
  \centerline{(a)}
\end{minipage}
\begin{minipage}{0.48\linewidth}
\includegraphics[width=\linewidth]{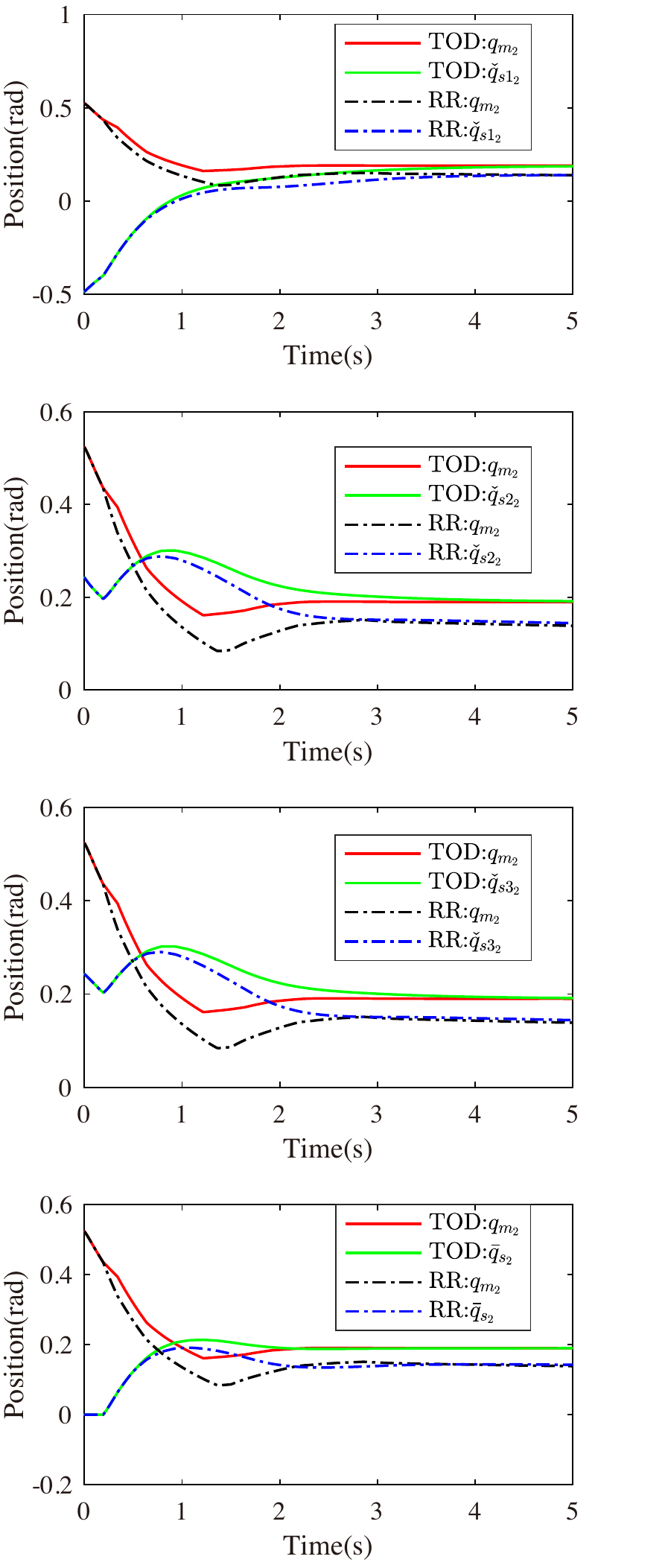}\\
   \centerline{(b)}
\end{minipage}
\caption{Scenario 1: joint positions of the master and the slave manipulators. (a) joint 1, (b) joint 2. }
\label{fig:pos_free}
\end{figure}

\begin{figure}[t!]\centering
\begin{minipage}{0.48\linewidth}
\centerline{ \includegraphics[width=\linewidth]{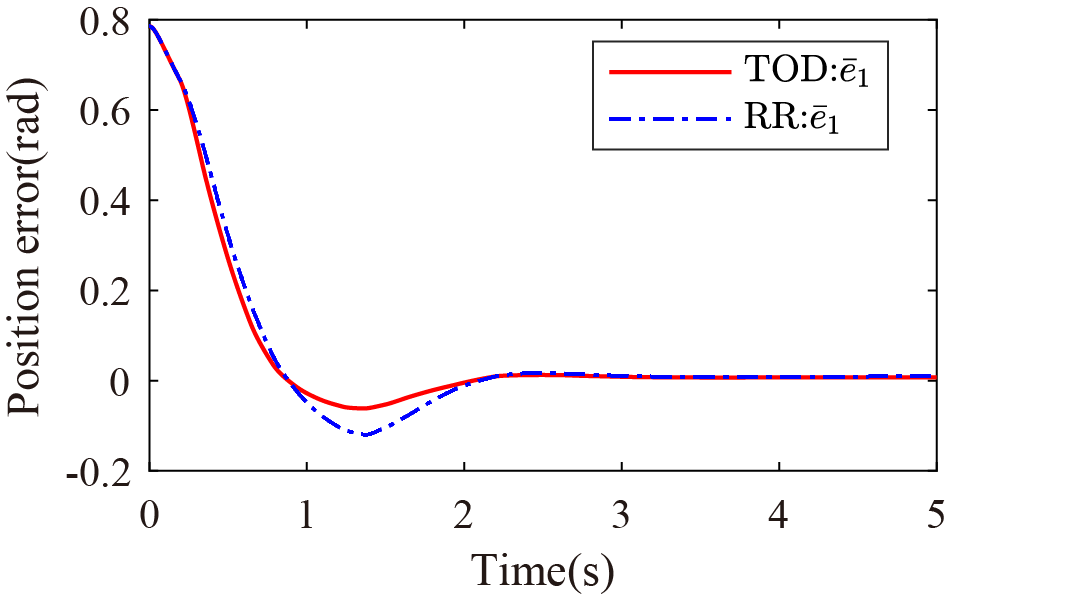}}
  \centerline{(a)}
\end{minipage}
\begin{minipage}{0.48\linewidth}
\includegraphics[width=\linewidth]{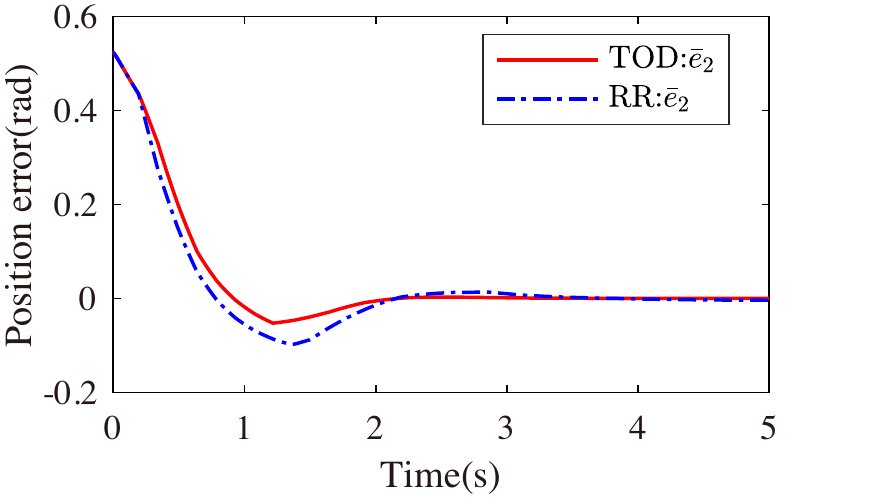}\\
   \centerline{(b)}
\end{minipage}
\caption{Scenario 1: joint position errors between the master and the slave manipulators. (a) joint 1, (b) joint 2. }
\label{fig:pos_error_free}
\end{figure}

Simulation studies on a  SMMS teleoperation system  with one 2-DOF master manipulator and three 2-DOF slave manipulators have been performed. The parameters of the dynamic models in (\ref{eq:master}-\ref{eq:slave})  are given as:
\begin{align*}
&M_i(q_i)=\begin{bmatrix}
M_{i_{11}}(q_i)&M_{i_{12}}(q_i)\\
M_{i_{21}}(q_i)&M_{i_{22}}(q_i)
\end{bmatrix}, G_i(q_i)=\begin{bmatrix}
G_{i_1}(q_i) \\
G_{i_2}(q_i)
\end{bmatrix},\\&C_i(q_i,\dot{q}_i)=\begin{bmatrix}
C_{i_{11}}(q_i,\dot{q}_i)&C_{i_{12}}(q_i,\dot{q}_i)\\
C_{i_{21}}(q_i,\dot{q}_i)&C_{i_{22}}(q_i,\dot{q}_i)
\end{bmatrix},
\end{align*}
where
$M_{i_{11}}(q_i)\!=\!l_{i_2}^2m_{i_2}\!+\!l_{i_1}^2(m_{i_1}+m_{i_2})\!+\!2l_{i_1}l_{i_2}m_{i_2}\cos(q_{i_2})$,
$M_{i_{22}}(q_i)\!=\!l_{i_2}^2m_{i_2}$,
$M_{i_{12}}(q_i)\!=\!M_{i_{21}}(q_i)=l_{i_2}^2m_{i_2}\!+\!l_{i_1}l_{i_2}m_{i_2}\cos(q_{i_2})$,
$C_{i_{11}}(q_i,\dot{q}_i)\!=\!-l_{i_1}l_{i_2}m_{i_2}\sin(q_{2_i})\dot{q}_{i_2}$,
$C_{i_{12}}(q_i,\dot{q}_i)\!=\!-l_{i_1}l_{i_2}m_{i_2}\sin(q_{2_i})(\dot{q}_{1_i}\!+\!\dot{q}_{2_i})$,
$C_{i_{21}}(q_i,\dot{q}_i)\!=\!l_{i_1}l_{i_2}m_{i_2}\sin(q_{2_i})\dot{q}_{1_i},C_{i_{22}}(q_i,\dot{q}_i)=0$,
$G_{i_1}(q_i)\!=\!\frac{1}{l_{i_2}}gl_{i_2}^2m_{i_2}\cos(q_{1_i}+q_{2_i})\!+\!
\frac{1}{l_{i_1}}(l_{i_2}^2m_{i_2}+l_{i_1}^2(m_{i_1}+m_{i_2})-l_{i_2}^2m_{i_2})\cos(q_{1_i})$,
$G_{i_2}(q_i)\!=\!\frac{1}{l_{i_2}}gl_{i_2}^2m_{i_2}\cos(q_{1_i}+q_{2_i})$,
with $i=\{m, s1, s2, s3\}$ representing the master manipulator, the first, the second and the third slave manipulators, respectively.
The mass of the manipulators are chosen as $m_{i_1}=1$kg, $m_{i_2}=0.5$kg, the length of links for the master and the slave robots are $l_{i_1}=0.5$m, $l_{i_2}=0.3$m. The controller parameters are set as $K_i^p=20I, K_i^d=20I$ for $i=m, s1, s2, s3$ in the following simulations. The communication delays are set as $T_k=0.04+0.06|\sin(s_k)|$.
The sampling interval is chosen as $1.14$s, which is less than the max. allowable $MATI$ under both RR  and TOD scheduling protocols according to Table~\ref{tab:MATI_sim_N2N3}. We further assume that the relative distance from the formation's center to each slave manipulator is $\gamma_1=[0.1, -0.3]^T, \gamma_2=[-0.3, 0.15]^T, \gamma_3=[0.2, 0.15]^T$, respectively.  For the considered SMMS teleoperation system  under RR and TOD scheduling protocols, respectively, the following simulation set-ups are considered:
\begin{enumerate}
\item Scenario 1: The teleoperation system is driven by no external forces. That is, the external forces $f_m(t)=f_{si}(t)\equiv 0$. Under this circumstance, the initial conditions for the master and the three slave manipulators are chosen as  $q_m(t)=[\frac{\pi}{4},\frac{\pi}{6}]^T, q_{s1}(t)=[\frac{\pi}{8},-\frac{\pi}{4}]^T$, $q_{s2}(t)=[-\frac{1}{4}\pi,\frac{1}{8}\pi]^T$, $q_{s3}=[\frac{1}{8}\pi,\frac{1}{8}\pi]^T$, and $\dot{q}_m(t)=\dot{q}_{s1}(t)=\dot{q}_{s2}(t)=\dot{q}_{s3}(t)=[0, 0]^T$, $\ddot{q}_m(t)=\ddot{q}_{s1}(t)=\ddot{q}_{s2}(t)=\ddot{q}_{s3}(t)=[0, 0]^T$.

\begin{figure}[t!]\centering
\subfigure[]{
\label{Fig.sub.1}
\includegraphics[width=0.48\linewidth]{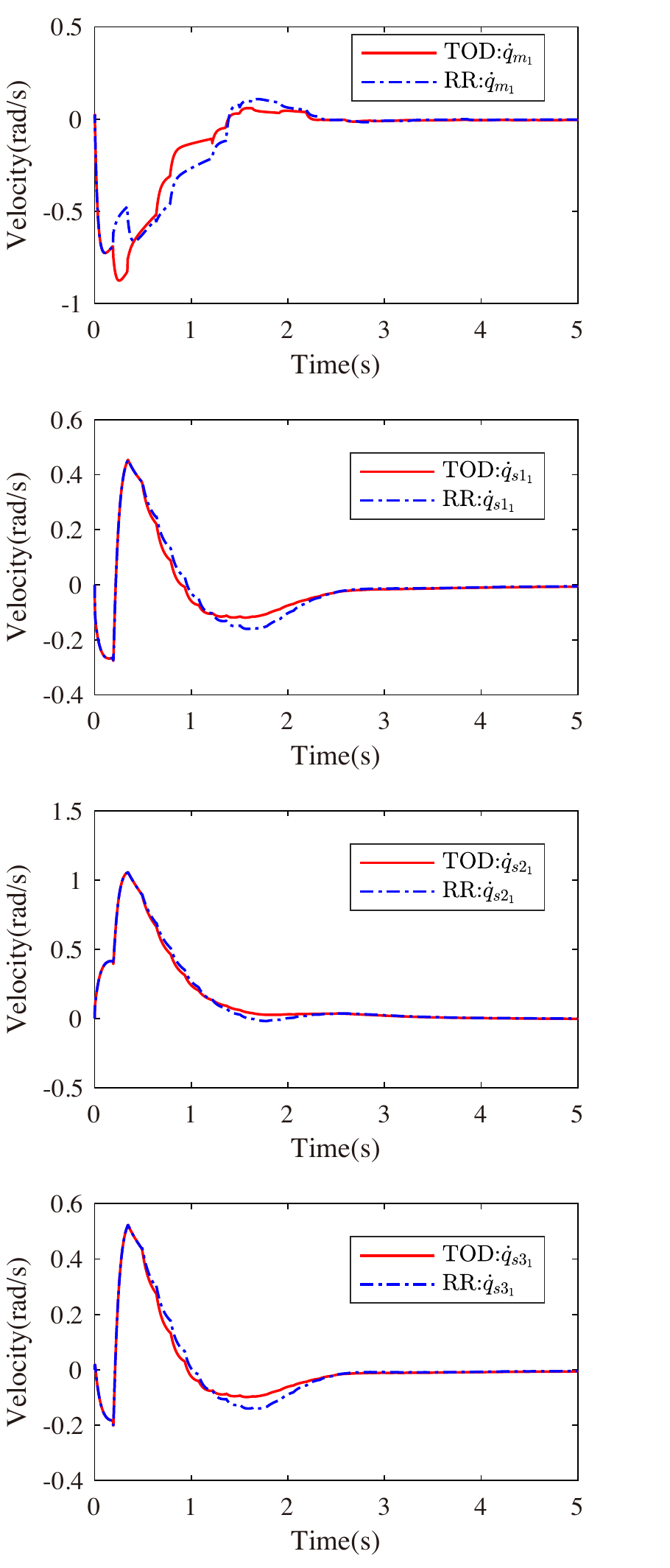}}
\subfigure[]{
\label{Fig.sub.2}
\includegraphics[width=0.48\linewidth]{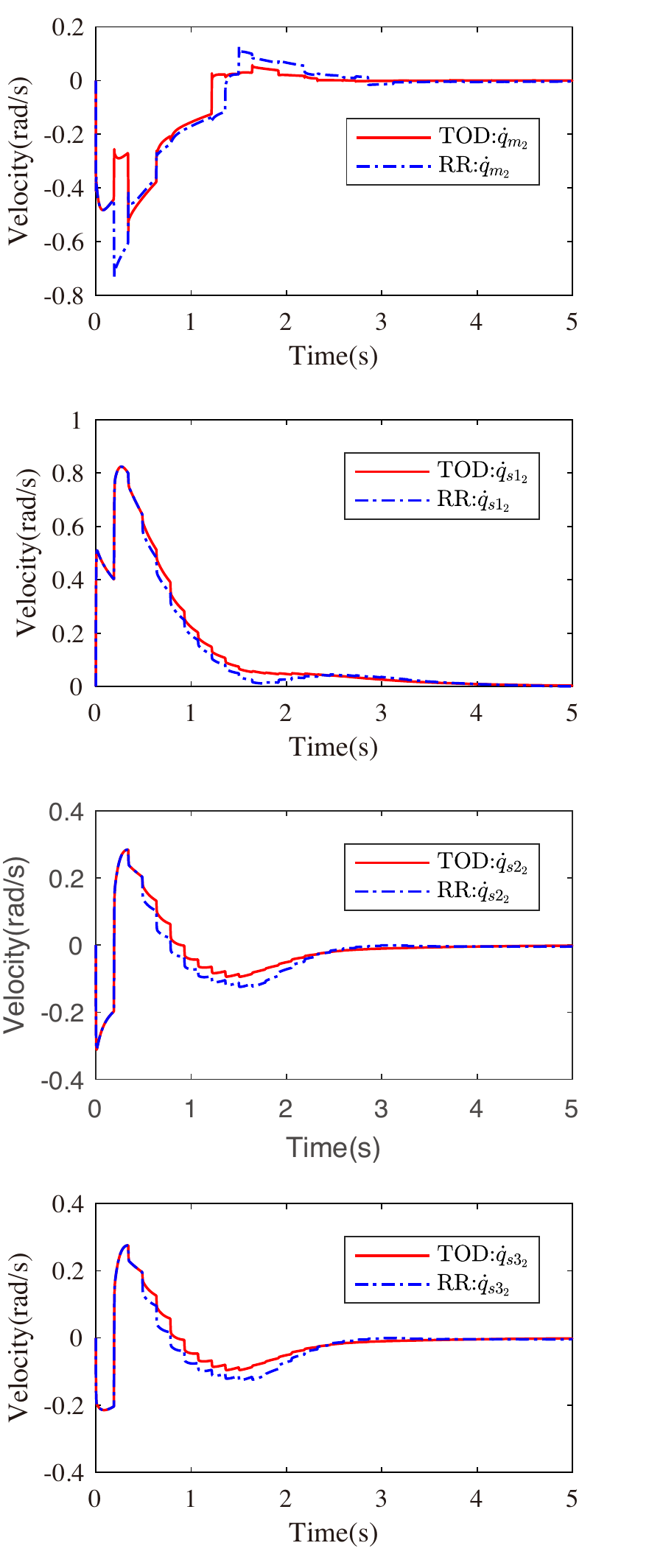}}
\caption{Scenario 1: joint velocities  of the master and the slave manipulators.(a) joint 1, (b) joint 2. }
\label{fig:vel_free}
\end{figure}

\item Scenario 2: The master's end-effector is driven by a bounded force  while the slaves move in free motion. We assume that a force $F_1(t)=25+10\sin(t)$ is exerted to the master manipulator at the end-effector. Thus the torque applied by the human operator is $f_m=J_m^T[0,1]^T F_1(t)$, where
$J_m=\begin{bmatrix}  J_{m_{11}}&J_{m_{12}}\\J_{m_{21}}&J_{m_{22}}\end{bmatrix}$with $J_{m_{11}}=-L_{m_1}\sin(q_{m_1})-L_{m_2}\sin(q_{m_1}+q_{m_2})$ , $J_{m_{12}}=-L_{m_2}\sin(q_{m_1}+q_{m_2})$, $J_{m_{21}}=L_{m_1}\cos(q_{m_1})+L_{m_2}\cos(q_{m_1}+q_{m_2})$, $J_{m_{22}}=L_{m_2}\cos(q_{m_1}+q_{m_2})$. Clearly, $F_1(t)\in\mathcal{L}_\infty$ and thus $f_m(t)\in\mathcal{L}_\infty$.
 For simplicity, the initial conditions for the master and the slaves are set as zeros, i.e.,  $q_i(t)=\dot{q}_{i}(t)=\ddot{q}_i=[0, 0]^T$, $i=m,s1,s2,s3$, repectively.

 \item Senario 3: The master's end-effector is driven by a  rectangle signal $F_2(t)$ depicted in Fig.~\ref{fig:f}, Obviously, it belongs to $\mathcal{L}_2$. In the slave side, there is a stiff wall at $y=0.3m$, and the contact torque is expressed as $f_{si}(t)=-J_{si}^T[0,1]^T 10000(y-0.3)$Nm.
 The initial condition for each manipulator is still assumed to be zero in this case.
\end{enumerate}

The simulation results for the considered teleoperation system (\ref{eq:master}-\ref{eq:slave}) under RR scheduling protocol (\ref{eq:ikstar_RR}-\ref{eq:ik_RR})  and TOD scheduling protocol (\ref{eq:Def_TOD}), respectively, are  depicted in Fig.~\ref{fig:pos_free} -- Fig.~\ref{fig:tau_contact}, and it is observed that the closed-loop system is stable under the scheduling protocols in all the simulation circumstances.

\begin{figure}[h!]\centering
\begin{minipage}{0.48\linewidth}
\centerline{ \includegraphics[width=\linewidth]{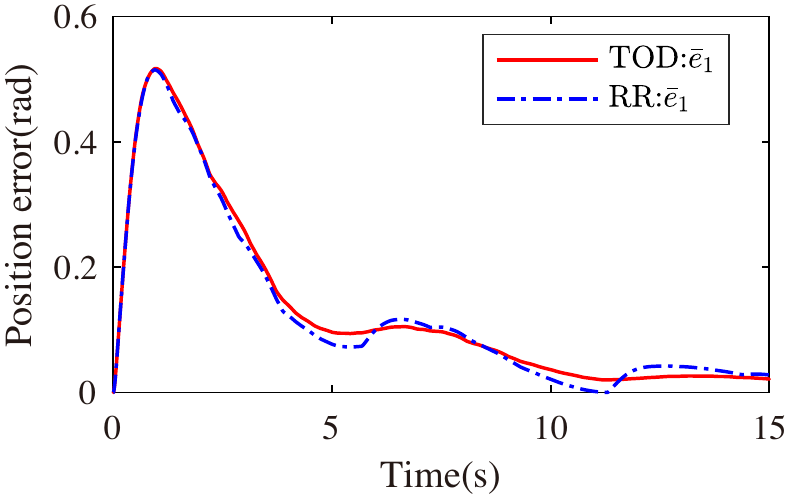}}
  \centerline{(a)}
\end{minipage}
\begin{minipage}{0.48\linewidth}
\includegraphics[width=\linewidth]{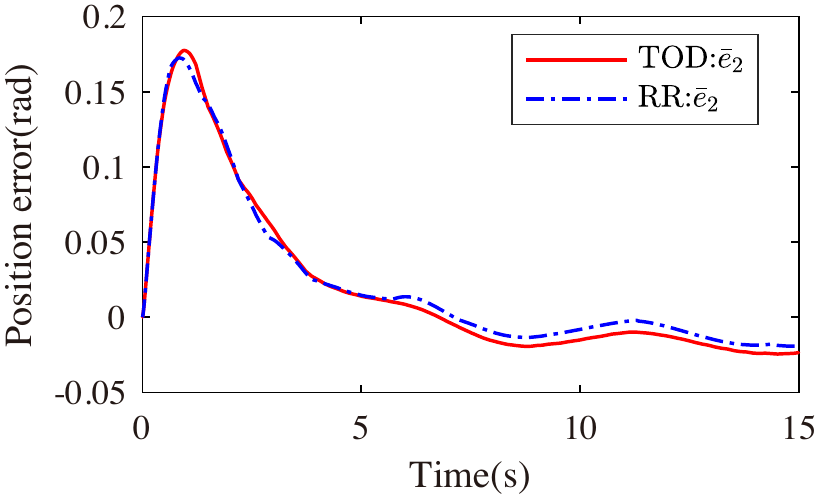}\\
   \centerline{(b)}
\end{minipage}
\caption{Scenario 2: joint position errors between the master and the slave manipulators. (a) joint 1, (b) joint 2. }
\label{fig:pos_error_Linf}
\end{figure}


\begin{figure}[h!]\centering
\begin{minipage}{0.48\linewidth}
\centerline{ \includegraphics[width=\linewidth]{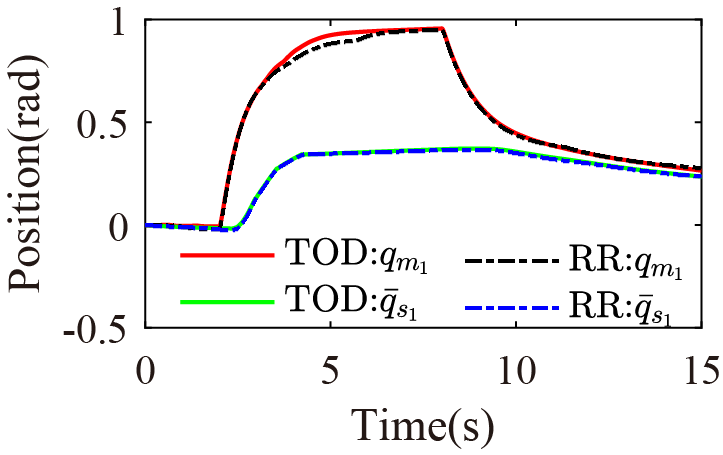}}
  \centerline{(a)}
\end{minipage}
\begin{minipage}{0.48\linewidth}
\includegraphics[width=\linewidth]{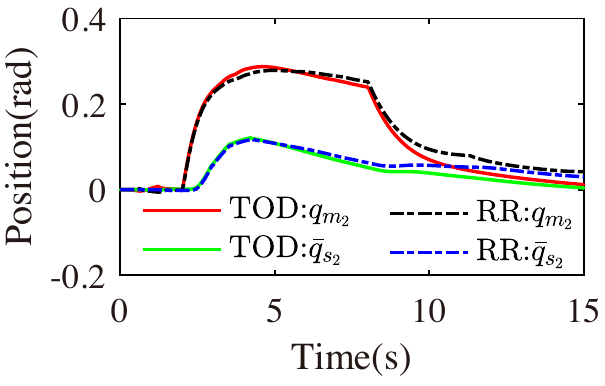}\\
   \centerline{(b)}
\end{minipage}
\caption{Scenario 3: joint positions of the master and the slave manipulators. (a) joint 1, (b) joint 2. }
\label{fig:pos_contact}
\end{figure}

\begin{figure}[h!]\centering
\begin{minipage}{0.48\linewidth}
\centerline{ \includegraphics[width=\linewidth]{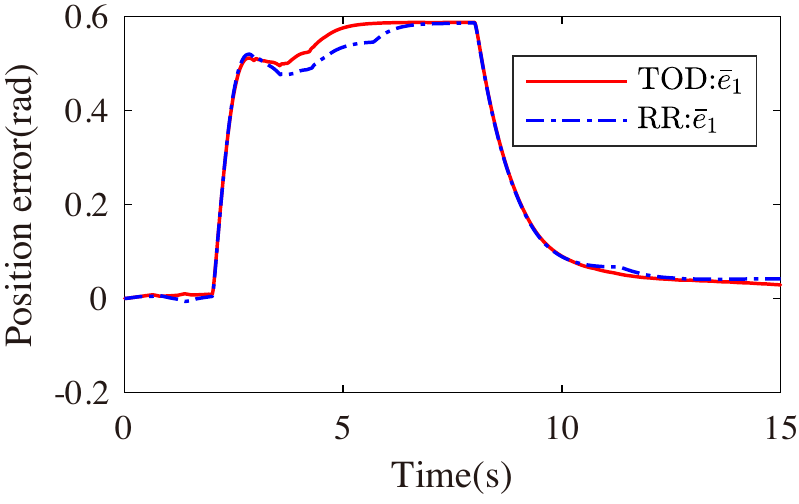}}
  \centerline{(a)}
\end{minipage}
\begin{minipage}{0.48\linewidth}
\includegraphics[width=\linewidth]{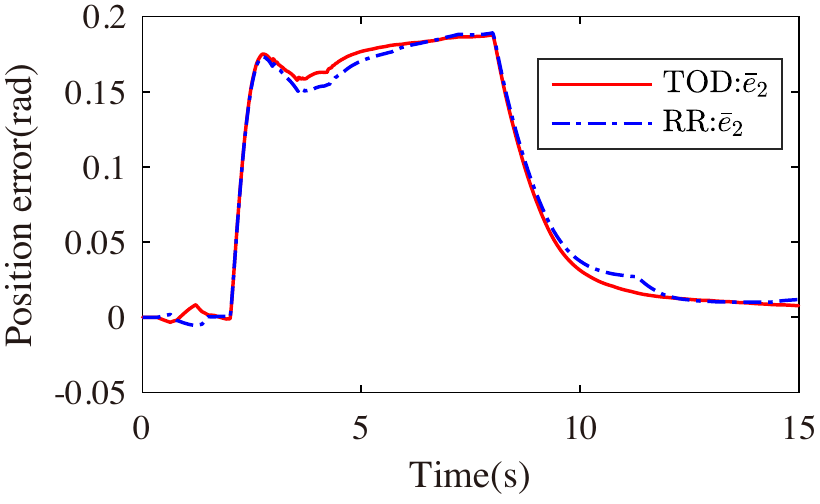}\\
   \centerline{(b)}
\end{minipage}
\caption{Scenario 3: joint position errors between the master and the slave manipulators. (a) joint 1, (b) joint 2. }
\label{fig:pos_error_contact}
\end{figure}

\begin{figure}[h!]\centering
\begin{minipage}{0.48\linewidth}
\centerline{ \includegraphics[width=1\linewidth]{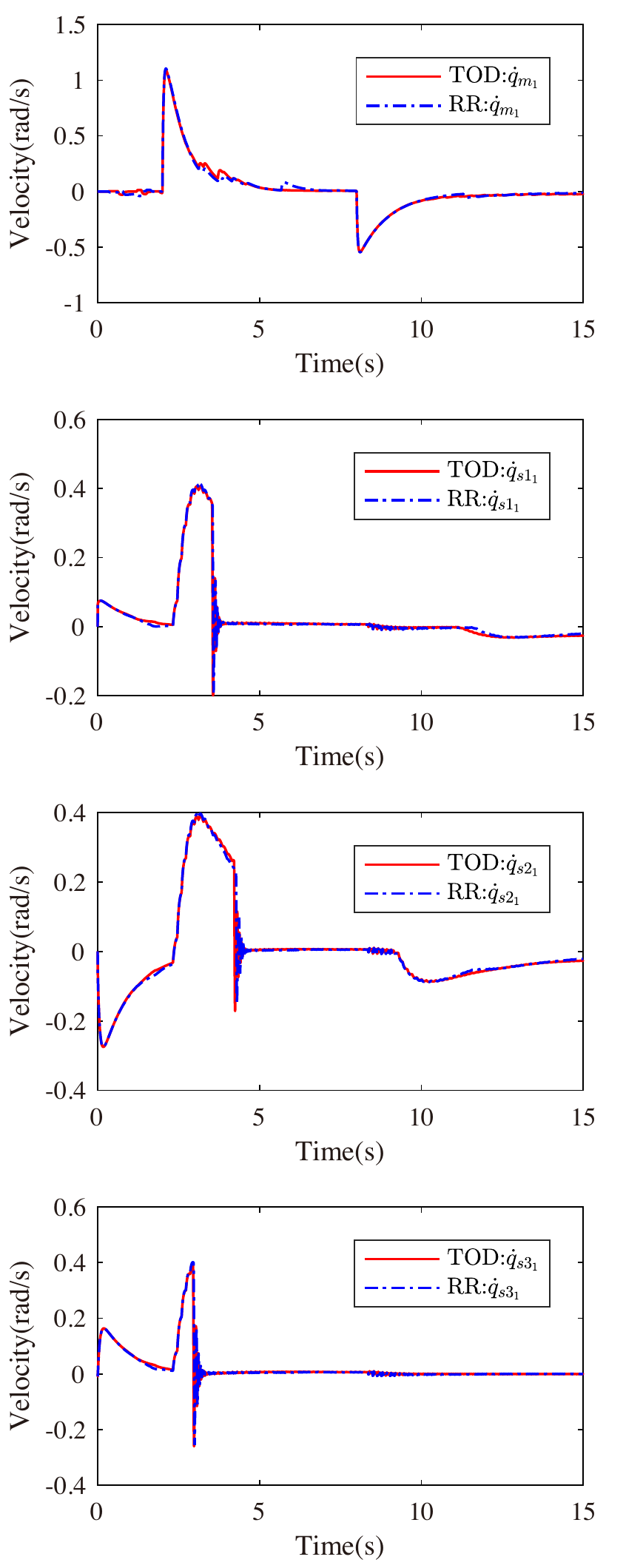}}
  \centerline{(a)}
\end{minipage}
\begin{minipage}{0.48\linewidth}
\includegraphics[width=1\linewidth]{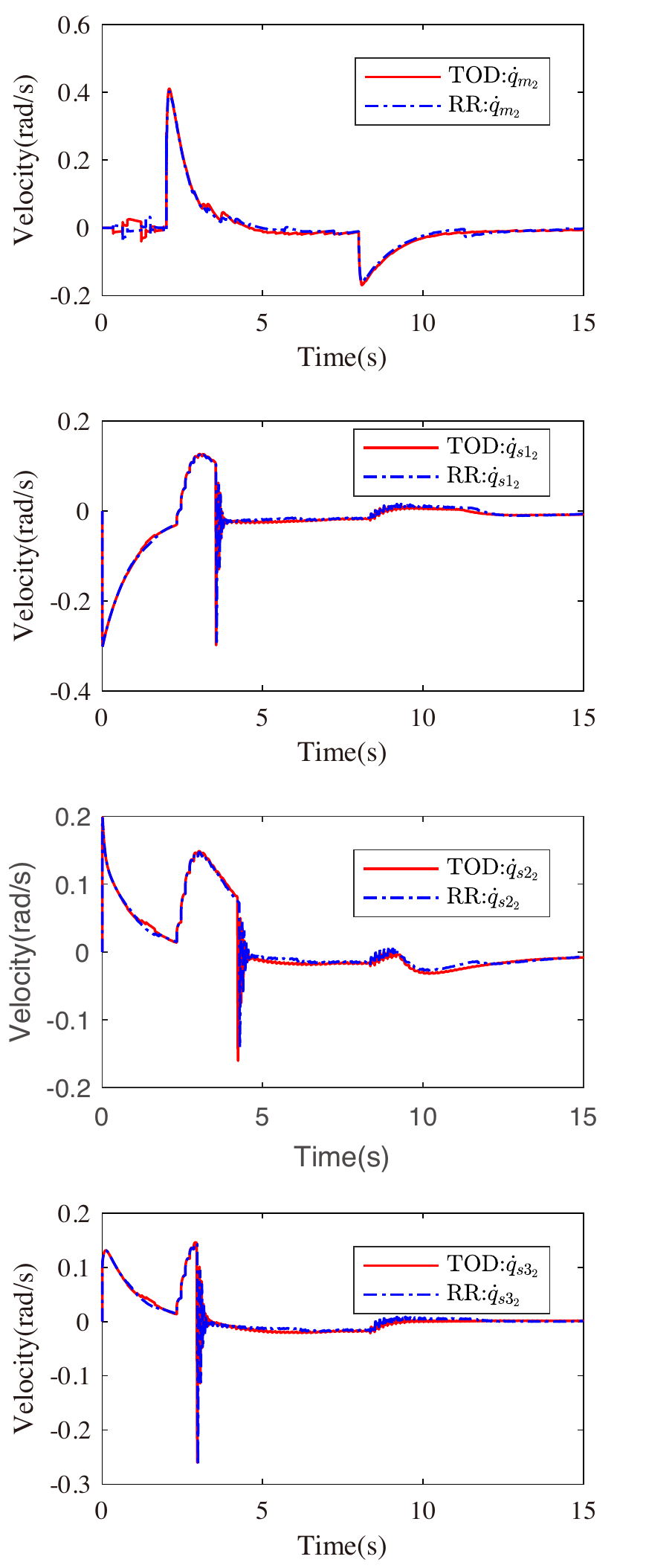}\\
   \centerline{(b)}
\end{minipage}
\caption{Scenario 3: joint velocities  of the master and the slave manipulators. (a) joint 1, (b) joint 2. }
\label{fig:vel_contact}
\end{figure}

We now analyze the simulation results of the teleoperation system in different simulation scenarios. Firstly, when the considered teleoperation system is in free motion, i.e., simulation scenario 1, the simulation results are given in Fig.~\ref{fig:pos_free} -- Fig.~\ref{fig:vel_free}. From Fig. \ref{fig:pos_free}, we know that the formation's center of the slaves follows the master's position at around 2s (the last row of Fig.~\ref{fig:pos_free}) under the scheduling protocols. The tracking performance of each pair of $q_m$ and $\check{q}_{si}$ is also provided in Fig.~\ref{fig:pos_free}. It is noted that in Fig.~\ref{fig:pos_free}, the positions of the manipulators under TOD scheduling protocol are higher than the ones under RR scheduling protocol. Fig. \ref{fig:pos_error_free} depicts the curves of position errors, and it is shown that the position errors under TOD scheduling protocol converge faster to the origin in this scenario. 
From Fig.~\ref{fig:vel_free}, we observe that the velocities of each manipulator converge to the origin very fast (after about 3s).

\begin{figure}[h!]\centering
\begin{minipage}{0.48\linewidth}
\centerline{ \includegraphics[width=1\linewidth]{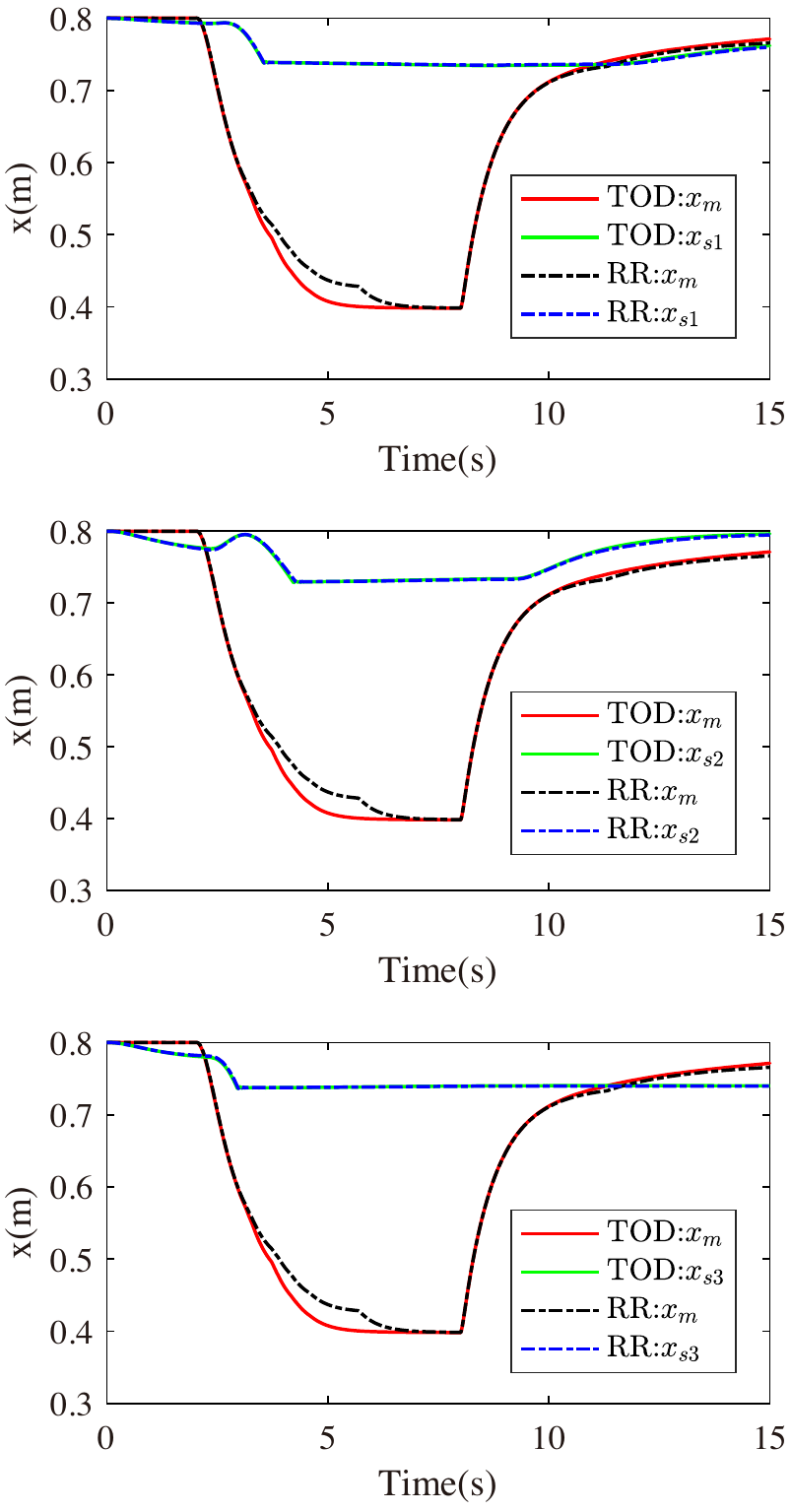}}
  \centerline{(a)}
\end{minipage}
\begin{minipage}{0.48\linewidth}
\includegraphics[width=1\linewidth]{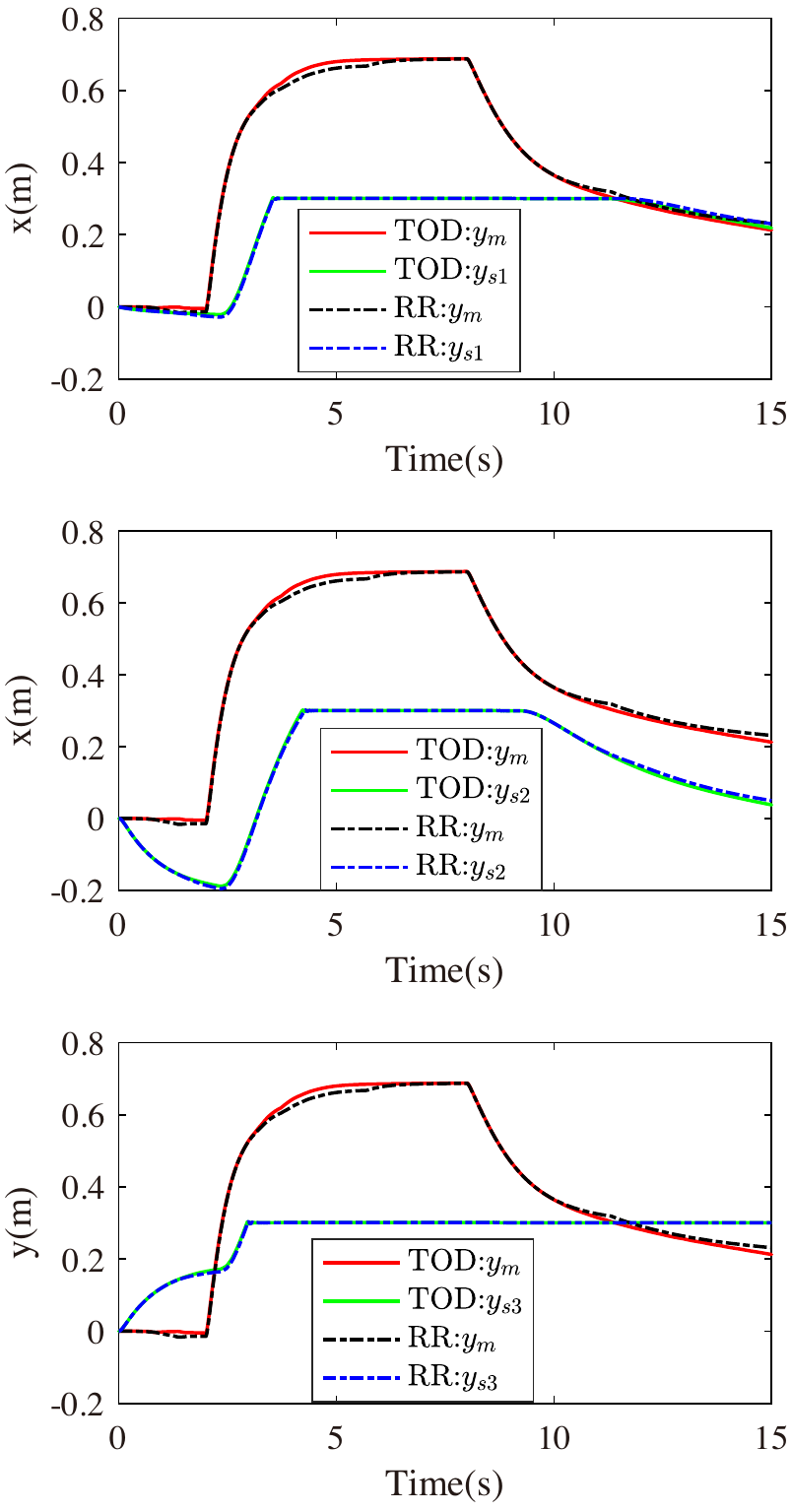}\\
   \centerline{(b)}
\end{minipage}
\caption{Scenario 3: end-effector positions of master and slave manipulators. (a) x-axis, (b) y-axis. }
\label{fig:endpos_contact_RR}
\end{figure}


\begin{figure}[t!]\centering
\subfigure[]{
\label{Fig.sub.1}
\includegraphics[width=0.48\linewidth]{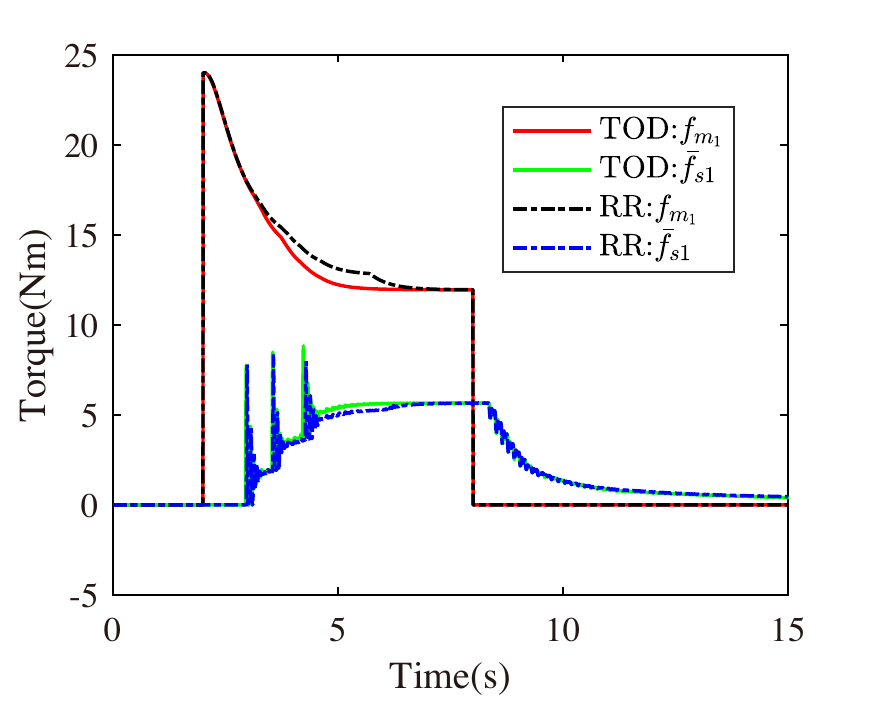}}
\subfigure[]{
\label{Fig.sub.2}
\includegraphics[width=0.48\linewidth]{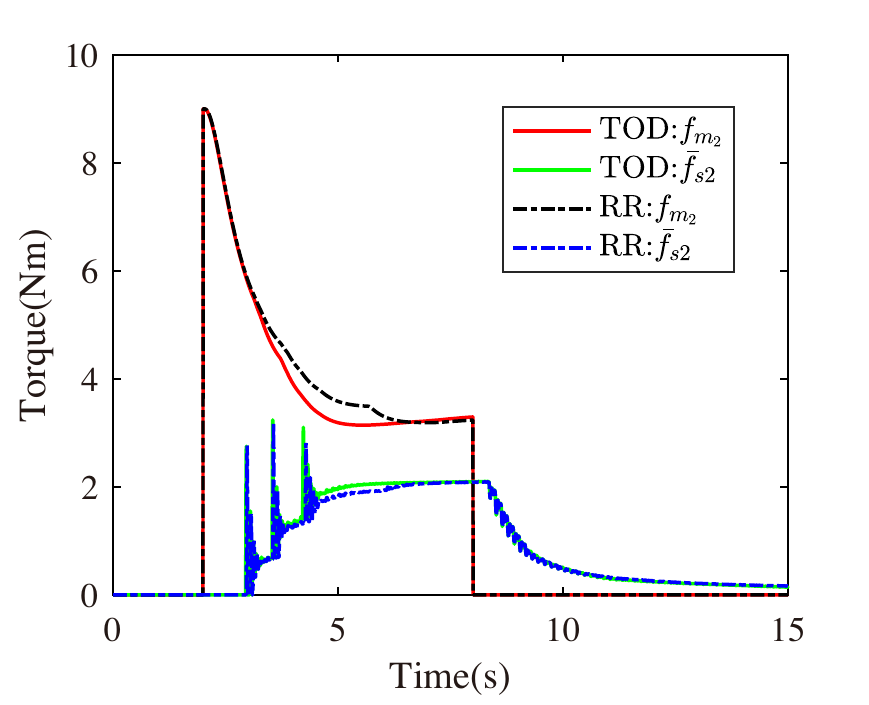}}
\caption{Scenario 3: external forces for master and slave manipulators. (a) joint 1, (b) joint 2. }
\label{fig:tau_contact}
\end{figure}

Secondly, the considered teleoperation system moves in Scenario 2.
Due to the space limitation, we only provide Fig. \ref{fig:pos_error_Linf} which shows the position errors of both links in this scenario. It is observed that the position errors between the master and the slaves are different when different scheduling protocols are employed, however, compared with the results in Fig.~\ref{fig:pos_free}~-~Fig.~\ref{fig:pos_error_free}, it is not easy to conclude which one is better than the other in this simulation scenario.  This is because that, compared with the impact of the scheduling protocols on the closed-loop system, the external forces have a far greater impact on the teleoperation system. This is also applicable for the  simulation in Scenario~3, for which both the human force and environmental forces are considered.

Finally, the simulation results can be found in Fig.~\ref{fig:pos_contact}~--~Fig.~\ref{fig:vel_contact} when the master is driven by a rectangle signal and all the slaves contact or at least one slave is in contact with a rigid wall. It is found that from  Fig.~\ref{fig:pos_contact}, the position of the master and the formation center of the slaves converge to each other after around the time $t=12s$. From Fig.~\ref{fig:vel_contact}, it can be seen that the joint velocities converge to the origin around the time $t=12s$ as well. During 4.5s to 9s, the manipulators stay stationary as the external interactions with the human operator and with the environments keep the same and the velocities are zero during this time interval.
Note that when the human force disappears, i.e., $t\geq 8$s, the joint positions under RR scheduling protocol are higher than the ones under TOD scheduling protocol, which are different from the system responses in Scenario 1. This is because the contact forces for the slaves still exist after the human force disappears (see Fig. ~\ref{fig:endpos_contact_RR}(b)), especially, the third slave is always in contact with the wall during this simulation. Hence the teleoperation system is not in free motion when $t\geq 8$s. This also confirms that the external forces have a greater impact on the teleoperation system compared with the influence of scheduling protocols.
The position tracking performance of the end-effectors and the force tracking performance under the scheduling protocols, which are depicted in Fig.~\ref{fig:endpos_contact_RR} and Fig.~\ref{fig:tau_contact}, respectively, are also provided.  From Fig.~\ref{fig:endpos_contact_RR}(b), we find that the motion of the slaves' end-effectors follow the master end-effectors' with a bias (regarding to $\gamma_i$) at the beginning (around $t<2.5$s) until one of the slaves reaches contact with the wall at $y_{si}=0.3$, then the slave(s) can not move further. The master also stays stationary after a short delay. This implies that the slaves' motions are fed back to the master through the scheduling communication network. When the slaves reach contact with the stiff wall, there is an almost static position error between the master and the slaves' formation center, this error implies the magnitude of input forces $f_{s1}, f_{s2}, f_{s3}$. Fig.~\ref{fig:tau_contact} depicts the curves of input forces, which show that the magnitude of human force $f_m$ is almost the same as the  mean value of all the environmental forces $f_{s1}, f_{s2}, f_{s3}$.


\section{Conclusion}\label{sec:conclusion}
In this paper, bilateral teleoperation of multiple slaves under scheduling communication has been investigated. RR scheduling and TOD scheduling protocols have been respectively utilized to transmit the information of multiple slaves, and only the newly-updated information of one slave can be transmitted through the communication network to the master side. The time-varying transmission time delays have been considered. With the proposed hybrid P+d controller under scheduling communication, the stability criteria in terms of LMIs,  which give the sufficient conditions related to the controller gains, the upper bound of time delays, and the maximum allowable sampling interval, have been provided by properly constructing new kinds of Lyapunov-Krasovskii functionals. The upper bound of the sampling interval and time delays can be derived to ensure the master-slave synchronization for teleoperation. Finally, numerical studies have been given and an example of teleoperation system with one master and three slaves has been provided for simulation illustration.  Future works will include the extension to MMMS teleoperation systems interacting with complex environments or completing complex tasks such as grabbing an object,  under hybrid scheduling protocols, packet dropouts and asymmetric time delays.

\section*{Acknowledgments}
This work was jointly supported by
the National Natural Science Foundation of China (No. 61333002, 61503026, 61773053),  the Fundamental Research Funds for the China Central Universities of
USTB (No. FRF-TP-16-024A1, FPR-BD-16-005A, FRF-GF-17-A4),  the Beijing Key Discipline Development Program (No. XK100080537), and the Beijing Natural Science Foundation (No. 4182039).

\bibliographystyle{IEEEtran_doi}
\bibliography{Scheduling}

\begin{thebibliography}{10}
\providecommand{\DOI}[1]{#1}
\csname DOI@samestyle\endcsname
\providecommand{\newblock}{\relax}
\providecommand{\bibinfo}[2]{#2}
\providecommand{\BIBentrySTDinterwordspacing}{\spaceskip=0pt\relax}
\providecommand{\BIBentryALTinterwordstretchfactor}{4}
\providecommand{\BIBentryALTinterwordspacing}{\spaceskip=\fontdimen2\font plus
\BIBentryALTinterwordstretchfactor\fontdimen3\font minus
  \fontdimen4\font\relax}
\providecommand{\BIBforeignlanguage}[2]{{%
\expandafter\ifx\csname l@#1\endcsname\relax
\typeout{** WARNING: IEEEtran.bst: No hyphenation pattern has been}%
\typeout{** loaded for the language `#1'. Using the pattern for}%
\typeout{** the default language instead.}%
\else
\language=\csname l@#1\endcsname
\fi
#2}}
\providecommand{\BIBdecl}{\relax}
\BIBdecl

\bibitem{Hokayem2006_survey}
P.~F. Hokayem and M.~W. Spong, ``Bilateral teleoperation: An historical
  survey,'' \emph{Automatica}, vol.~42, no.~12, pp. 2035--2057, 2006.

\bibitem{Ghorbanian2013_dualmaster}
A.~Ghorbanian, S.~Rezaei, A.~Khoogar, M.~Zareinejad, and K.~Baghestan, ``A
  novel control framework for nonlinear time-delayed dual-master/single-slave
  teleoperation,'' \emph{ISA Transactions}, vol.~52, no.~2, pp. 268--277, 2013.

\bibitem{Li2010_multiple}
Z.~Li, J.~Li, and Y.~Kang, ``Adaptive robust coordinated control of multiple
  mobile manipulators interacting with rigid environments,'' \emph{Automatica},
  vol.~46, no.~12, pp. 2028--2034, 2010.

\bibitem{Palafox2009_multiple}
O.~M. Palafox and M.~W. Spong, ``Bilateral teleoperation of a formation of
  nonholonomic mobile robots under constant time delay,'' in \emph{Proceedings
  of the 2009 IEEE/RSJ International Conference on Intelligent Robots and
  Systems}.\hskip 1em plus 0.5em minus 0.4em\relax Piscataway, NJ, USA: IEEE
  Press, 2009, pp. 2821--2826.

\bibitem{rodriguez2010bilateral}
E.~J. Rodr{\'\i}guez-Seda, J.~J. Troy, C.~A. Erignac, P.~Murray, D.~M.
  Stipanovic, and M.~W. Spong, ``Bilateral teleoperation of multiple mobile
  agents: Coordinated motion and collision avoidance,'' \emph{IEEE Transactions
  on Control Systems Technology}, vol.~18, no.~4, pp. 984--992, 2010.

\bibitem{Nuno2011_tutorial}
E.~Nu\~no, L.~Basa\~nez, and R.~Ortega, ``Passivity-based control for bilateral
  teleoperation: A tutorial,'' \emph{Automatica}, vol.~47, no.~3, pp. 485--495,
  2011.

\bibitem{Spong1989_classical}
R.~J. Anderson and M.~W. Spong, ``Bilateral control of teleoperators with time
  delay,'' \emph{IEEE Transactions on Automatic Control}, vol.~34, no.~5, pp.
  494--501, 1989.

\bibitem{Niemeyer1991_classical}
G.~Niemeyer and J.~J.~E. Slotine, ``Stable adaptive teleoperation,'' \emph{IEEE
  Journal of Oceanic Engineering}, vol.~16, no.~1, pp. 152--162, 1991.

\bibitem{Nuno2010_improvedsyn}
E.~Nu\~no, R.~Ortega, and L.~Basa\~nez, ``An adaptive controller for nonlinear
  teleoperators,'' \emph{Automatica}, vol.~46, no.~1, pp. 155--159, 2010.

\bibitem{chan2014application}
L.~Chan, F.~Naghdy, and D.~Stirling, ``Application of adaptive controllers in
  teleoperation systems: A survey,'' \emph{IEEE Transactions on Human-Machine
  Systems}, vol.~44, no.~3, pp. 337--352, 2014.

\bibitem{Sarras2014_JFI}
I.~Sarras, E.~Nu\~no, and L.~Basa\~nez, ``An adaptive controller for nonlinear
  teleoperators with variable time-delays,'' \emph{Journal of the Franklin
  Institute}, vol. 351, no.~10, pp. 4817--4837, 2014.

\bibitem{Polushin2003_Iss}
I.~G. Polushin and H.~J. Marquez, ``Stabilization of bilaterally controlled
  teleoperators with communication delay: An {ISS} approach,''
  \emph{International Journal of Control}, vol.~76, no.~8, pp. 858--870, 2003.

\bibitem{polushin2013a}
I.~G. Polushin, S.~Dashkovskiy, A.~Takhmar, and R.~V. Patel, ``A small gain
  framework for networked cooperative force-reflecting teleoperation,''
  \emph{Automatica}, vol.~49, no.~2, pp. 338--348, 2013.

\bibitem{zhai2016SIIOS}
D.-H. Zhai and Y.~Xia, ``Adaptive stabilization of nonlinear teleoperation
  systems: An {SIIOS} approach,'' \emph{International Journal of Robust and
  Nonlinear Control}, vol.~26, no.~17, pp. 3696--3722, 2016.

\bibitem{Uddin2016survey}
R.~Uddin and J.~Ryu, ``Predictive control approaches for bilateral
  teleoperation,'' \emph{Annual Reviews in Control}, vol.~42, pp. 82--99, 2016.

\bibitem{li2015guaranteed}
Y.~Li, R.~Johansson, K.~Liu, and Y.~Yin, ``Guaranteed cost control design for
  delayed teleoperation systems,'' \emph{Journal of The Franklin Institute},
  vol. 352, no.~11, pp. 5085--5105, 2015.

\bibitem{mohammadi2017robust}
L.~Mohammadi, A.~Alfi, and B.~Xu, ``Robust bilateral control for state
  convergence in uncertain teleoperation systems with time-varying delay: a
  guaranteed cost control design,'' \emph{Nonlinear Dynamics}, vol.~88, no.~2,
  pp. 1413--1426, 2017.

\bibitem{yangX2014_fuzzy}
X.~Yang, C.~C. Hua, J.~Yan, and X.~P. Guan, ``A new master-slave torque design
  for teleoperation system by t-s fuzzy approach,'' \emph{IEEE Transactions on
  Control Systems Technology}, vol.~23, no.~4, pp. 1611--1619, 2015.

\bibitem{yang2014fuzzy}
Y.~Yang, C.~Hua, and X.~Guan, ``Adaptive fuzzy finite-time coordination control
  for networked nonlinear bilateral teleoperation system,'' \emph{IEEE
  Transactions on Fuzzy Systems}, vol.~22, no.~3, pp. 631--641, 2014.

\bibitem{wang2017NN}
H.~Wang, P.~X. Liu, and S.~Liu, ``Adaptive neural synchronization control for
  bilateral teleoperation systems with time delay and backlash-like
  hysteresis,'' \emph{IEEE Transactions on Cybernetics}, vol.~47, no.~10, pp.
  3018--3026, 2017.

\bibitem{yang2016NN}
C.~Yang, X.~Wang, L.~Cheng, and H.~Ma, ``Neural-learning-based telerobot
  control with guaranteed performance,'' \emph{IEEE Transactions on
  Cybernetics}, vol.~47, no.~10, pp. 3148--3159, 2017.

\bibitem{he2016constraints}
W.~He, Y.~Chen, and Z.~Yin, ``Adaptive neural network control of an uncertain
  robot with full-state constraints,'' \emph{IEEE Transactions on Cybernetics},
  vol.~46, no.~3, pp. 620--629, 2016.

\bibitem{yang2016_finitetime}
Y.~Yang, C.~Hua, and X.~Guan, ``Finite time control design for bilateral
  teleoperation system with position synchronization error constrained,''
  \emph{IEEE Transactions on Cybernetics}, vol.~46, no.~3, pp. 609--619, 2016.

\bibitem{zhai2017ppc}
D.~H. Zhai and Y.~Xia, ``A novel switching-based control framework for improved
  task performance in teleoperation system with asymmetric time-varying
  delays,'' \emph{IEEE Transactions on Cybernetics}, vol.~48, no.~2, pp.
  625--638, 2018.

\bibitem{yang2014_fuzzy}
Y.~Yang, C.~Hua, and X.~Guan, ``Adaptive fuzzy finite-time coordination control
  for networked nonlinear bilateral teleoperation system,'' \emph{IEEE
  Transactions on Fuzzy Systems}, vol.~22, no.~3, pp. 631--641, 2014.

\bibitem{yang2017bimanual}
C.~Yang, Y.~Jiang, Z.~Li, W.~He, and C.-Y. Su, ``Neural control of bimanual
  robots with guaranteed global stability and motion precision,'' \emph{IEEE
  Transactions on Industrial Informatics}, vol.~13, no.~3, pp. 1162--1171,
  2017.

\bibitem{hashemzadeh2013teleoperation}
F.~Hashemzadeh, I.~Hassanzadeh, and M.~Tavakoli, ``Teleoperation in the
  presence of varying time delays and sandwich linearity in actuators,''
  \emph{Automatica}, vol.~49, no.~9, pp. 2813--2821, 2013.

\bibitem{lee2014controller}
S.~Lee and H.~Ahn, ``Controller designs for bilateral teleoperation with input
  saturation,'' \emph{Control Engineering Practice}, vol.~33, pp. 35--47, 2014.

\bibitem{zhai2016_inputsaturation}
D.~H. Zhai and Y.~Xia, ``Adaptive control for teleoperation system with varying
  time delays and input saturation constraints,'' \emph{IEEE Transactions on
  Industrial Electronics}, vol.~63, no.~11, pp. 6921--6929, 2016.

\bibitem{kim2013a}
B.~Kim and H.~Ahn, ``A design of bilateral teleoperation systems using
  composite adaptive controller,'' \emph{Control Engineering Practice},
  vol.~21, no.~12, pp. 1641--1652, 2013.

\bibitem{zhai2015_adaptive}
D.-H. Zhai and Y.~Xia, ``Adaptive finite-time control for nonlinear
  teleoperation systems with asymmetric time-varying delays,''
  \emph{International Journal of Robust and Nonlinear Control}, vol.~26,
  no.~12, pp. 2586--2607, 2015.

\bibitem{qiu2017brain}
S.~Qiu, Z.~Li, W.~He, L.~Zhang, C.~Yang, and C.-Y. Su, ``Brain--machine
  interface and visual compressive sensing-based teleoperation control of an
  exoskeleton robot,'' \emph{IEEE Transactions on Fuzzy Systems}, vol.~25,
  no.~1, pp. 58--69, 2017.

\bibitem{Sirouspour2005}
S.~Sirouspour, ``Modeling and control of cooperative teleoperation systems,''
  \emph{IEEE Transaction on Robotics}, vol.~21, no.~6, pp. 1220--1225, 2005.

\bibitem{li2013_NNmultiple}
Z.~Li and C.~Y. Su, ``Neural-adaptive control of single-master-multiple-slaves
  teleoperation for coordinated multiple mobile manipulators with time-varying
  communication delays and input uncertainties,'' \emph{IEEE Transactions on
  Neural Networks and Learning Systems}, vol.~24, no.~9, pp. 1400--1413, 2013.

\bibitem{zhai2016_adaptive}
D.~H. Zhai and Y.~Xia, ``Adaptive fuzzy control of multilateral asymmetric
  teleoperation for coordinated multiple mobile manipulators,'' \emph{IEEE
  Transactions on Fuzzy Systems}, vol.~24, no.~1, pp. 57--70, 2016.

\bibitem{Siddique2016observer}
M.~Siddique and M.~Rehan, ``A concept of coupled chaotic synchronous observers
  for nonlinear and adaptive observers-based chaos synchronization,''
  \emph{Nonlinear Dynamics}, vol.~84, no.~4, pp. 2251--2272, 2016.

\bibitem{wen2018scheduling}
G.~Wen, Y.~Wan, J.~Cao, T.~Huang, and W.~Yu, ``Master-slave synchronization of
  heterogeneous systems under scheduling communication,'' \emph{IEEE
  Transactions on Systems, Man, and Cybernetics: Systems}, vol.~48, no.~3, pp.
  473--484, 2018.

\bibitem{liu2015networked}
K.~Liu, E.~Fridman, and K.~H. Johansson, ``Networked control with stochastic
  scheduling,'' \emph{IEEE Transactions on Automatic Control}, vol.~60, no.~11,
  pp. 3071--3076, 2015.

\bibitem{Fridman2010Aut}
E.~Fridman, ``A refined input delay approach to sampled-data control,''
  \emph{Automatica}, vol.~46, no.~2, pp. 421--427, 2010.

\bibitem{liu2012wirtingers}
K.~Liu and E.~Fridman, ``Wirtinger\'s inequality and {Lyapunov}-based
  sampled-data stabilization,'' \emph{Automatica}, vol.~48, no.~1, pp.
  102--108, 2012.

\bibitem{Spong2006_book}
M.~Wilson, M.~W. Spong, S.~Hutchinson, and M.~Vidyasagar, \emph{Robot modeling
  and control}.\hskip 1em plus 0.5em minus 0.4em\relax Wiley New York, 2006.

\bibitem{freirich2016decentralized}
D.~Freirich and E.~Fridman, ``Decentralized networked control of systems with
  local networks: A time-delay approach,'' \emph{Automatica}, vol.~69, pp.
  201--209, 2016.

\bibitem{al2017improved}
\BIBentryALTinterwordspacing
S.~Al-Wais, R.~Mohajerpoor, L.~Shanmugam, H.~Abdi, and S.~Nahavandi, ``Improved
  delay-dependent stability criteria for telerobotic systems with time-varying
  delays,'' \emph{IEEE Transactions on Systems, Man, and Cybernetics: Systems},
  2017. DOI: \DOI{10.1109/TSMC.2017.2706218}
\BIBentrySTDinterwordspacing

\bibitem{yangX2014_intervaldelay}
X.~Yang, C.~Hua, J.~Yan, and X.~Guan, ``Synchronization analysis for nonlinear
  bilateral teleoperator with interval time-varying delay,''
  \emph{International Journal of Robust and Nonlinear Control}, vol.~25, pp.
  2142--2161, 2014.

\bibitem{liu2015networkedSIAM}
K.~Liu, E.~Fridman, and L.~Hetel, ``Networked control systems in the presence
  of scheduling protocols and communication delays,'' \emph{SIAM Journal on
  Control and Optimization}, vol.~53, no.~4, pp. 1768--1788, 2015.

\end{thebibliography}

\end{document}